\newtheorem{theorem}{Theorem}[section]
\newtheorem*{theorem*}{Theorem}
\newtheorem{corollary}[theorem]{Corollary}
\newtheorem{lemma}[theorem]{Lemma}
\newtheorem*{lemma*}{Lemma}
\newtheorem{proposition}[theorem]{Proposition}
\theoremstyle{definition}
\newtheorem{definition}{Definition}[section]
\newtheorem*{definition*}{Definition}
\theoremstyle{remark}
\newtheorem{remark}{Remark}[section]
\newtheorem*{notation}{Notation}
\numberwithin{equation}{section}
\newcommand{\figref}[1]{Figure~\ref{#1}}
\newcommand{\secref}[1]{Section~\ref{#1}}
\newcommand{\thmref}[1]{Theorem~\ref{#1}}
\newcommand{\lemref}[1]{Lemma~\ref{#1}}
\newcommand{\propref}[1]{Proposition~\ref{#1}}
\newcommand{\defnref}[1]{Definition~\ref{#1}}
\newcommand{\corref}[1]{Corollary~\ref{#1}}
\newcommand{\appref}[1]{Appendix~\ref{#1}}
\date{}
\begin{document}

\title{Counting perfect matchings and the eight-vertex model}
\author{
{Jin-Yi Cai}\thanks{Department of Computer Sciences, University of Wisconsin-Madison. Supported by NSF CCF-1714275. \texttt{jyc@cs.wisc.edu}}
\and {Tianyu Liu}\thanks{Department of Computer Sciences, University of Wisconsin-Madison. Supported by NSF CCF-1714275. \texttt{tl@cs.wisc.edu}}
}

\maketitle

\hypersetup{colorlinks=true, citecolor=blue, linkcolor=red, urlcolor=blue}
\setcounter{footnote}{2}

\begin{abstract}
We study the approximation complexity of the partition function of the eight-vertex model on general 4-regular graphs. For the first time, we relate the approximability of the eight-vertex model to the complexity of approximately counting perfect matchings, a central open problem in this field.
Our results extend those in \cite{DBLP:journals/corr/abs-1811-03126}.

In a region of the parameter space where no previous approximation complexity was known,
we show that approximating the partition function is at least as hard as approximately counting perfect matchings via approximation-preserving reductions.
In another region of the parameter space which is larger than the previously known FPRASable region, we show that computing the partition function can be reduced to (with or without approximation) counting perfect matchings.
Moreover, we give a complete characterization of nonnegatively weighted (not necessarily planar) 4-ary matchgates, which has been open for several years.
The key ingredient of our proof is a \emph{geometric} lemma.

We also identify a region of the parameter space where approximating the partition function on \emph{planar} 4-regular graphs is feasible but on \emph{general} 4-regular graphs is equivalent to approximately counting perfect matchings. To our best knowledge, these are the first problems of this kind.
\end{abstract}


\section{Introduction}\label{sec:intro}

The eight-vertex model is defined over 4-regular graphs, the states of which are the set of \emph{even orientations}, i.e. those with an even number of arrows into (and out of) each vertex.
There are eight permitted types of local configurations around a vertex---hence the name eight-vertex model (see \figref{fig:orientations}).

\renewcommand{\thesubfigure}{-\arabic{subfigure}}
\begin{figure}[h!]
\centering
\begin{subfigure}[b]{0.12\linewidth}
\centering\includegraphics[width=\linewidth]{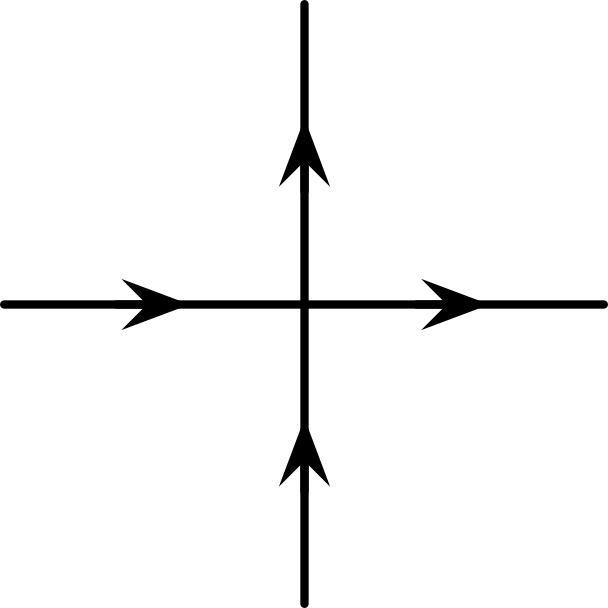}\caption{$1$}
\label{fig:orientations_1}
\end{subfigure}
\begin{subfigure}[b]{0.12\linewidth}
\centering\includegraphics[width=\linewidth]{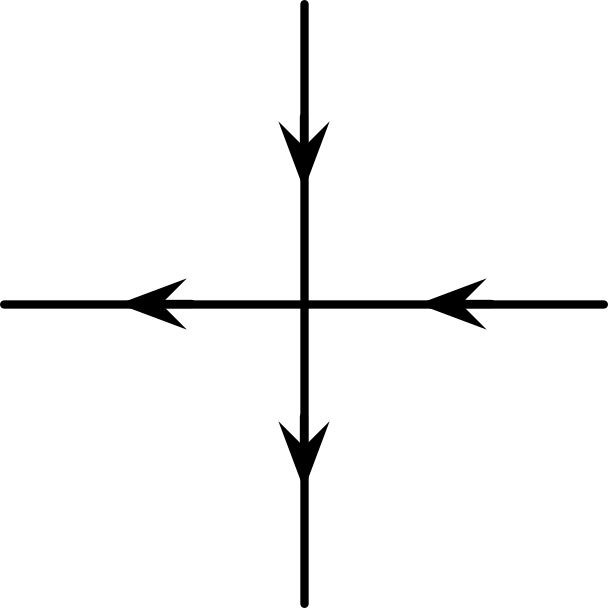}\caption{$2$}
\label{fig:orientations_2}
\end{subfigure}
\begin{subfigure}[b]{0.12\linewidth}
\centering\includegraphics[width=\linewidth]{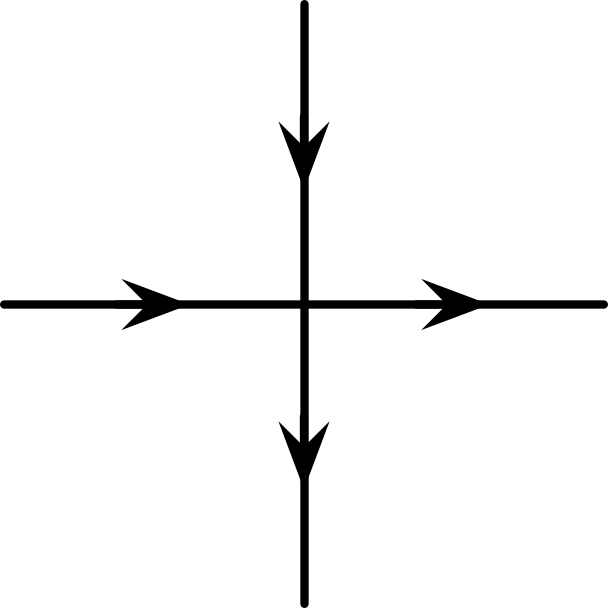}\caption{$3$}
\label{fig:orientations_3}
\end{subfigure}
\begin{subfigure}[b]{0.12\linewidth}
\centering\includegraphics[width=\linewidth]{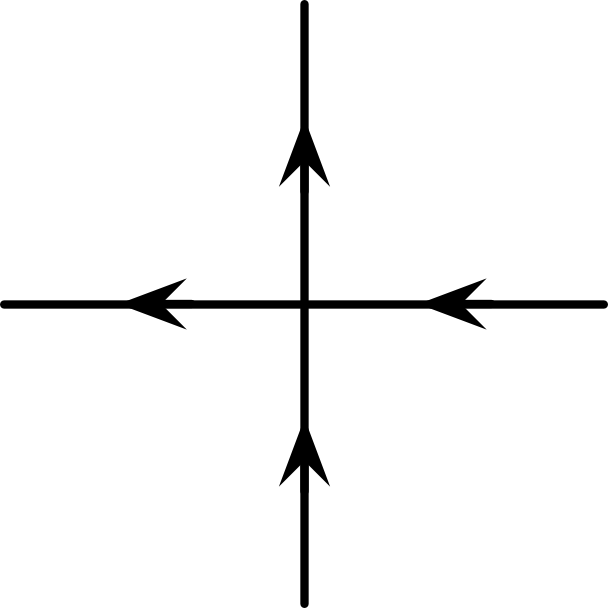}\caption{$4$}
\label{fig:orientations_4}
\end{subfigure}
\begin{subfigure}[b]{0.12\linewidth}
\centering\includegraphics[width=\linewidth]{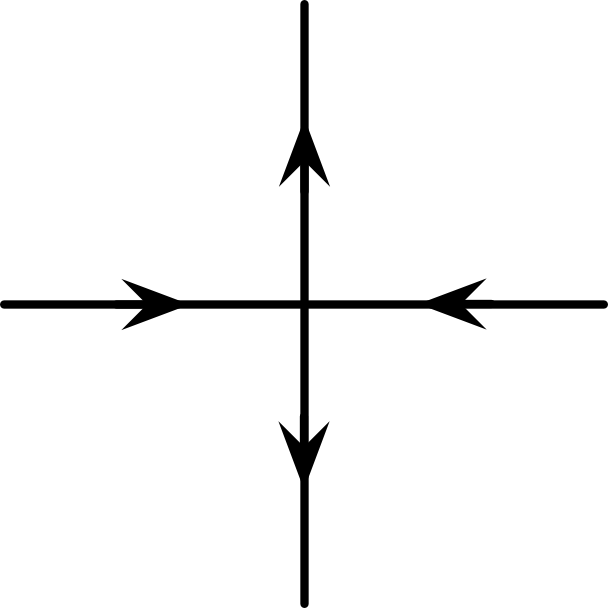}\caption{$5$}
\label{fig:orientations_5}
\end{subfigure}
\begin{subfigure}[b]{0.12\linewidth}
\centering\includegraphics[width=\linewidth]{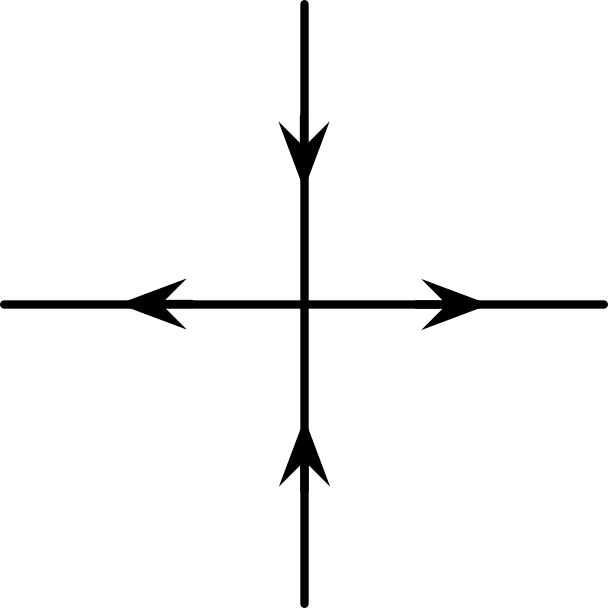}\caption{$6$}
\label{fig:orientations_6}
\end{subfigure}
\begin{subfigure}[b]{0.12\linewidth}
\centering\includegraphics[width=\linewidth]{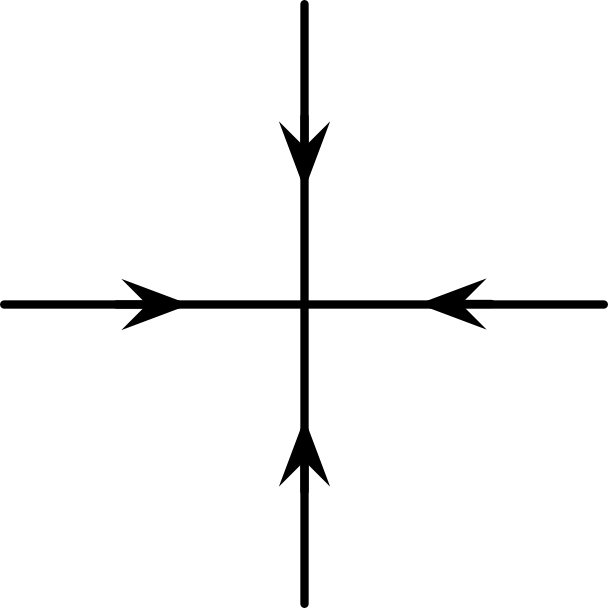}\caption{$7$}
\label{fig:orientations_7}
\end{subfigure}
\begin{subfigure}[b]{0.12\linewidth}
\centering\includegraphics[width=\linewidth]{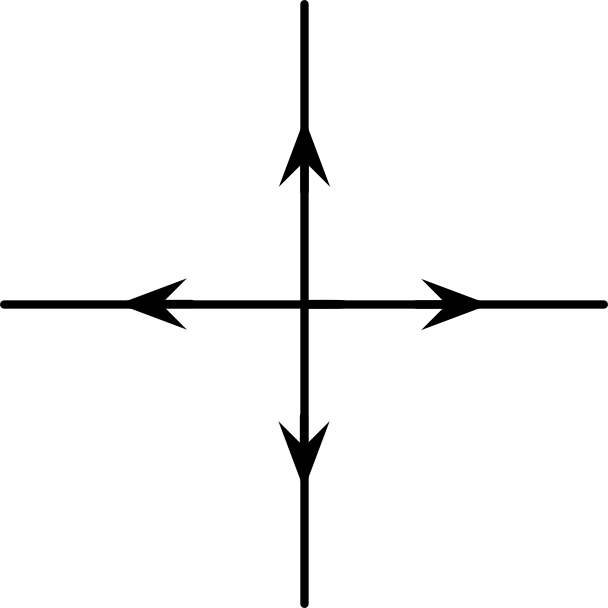}\caption{$8$}
\label{fig:orientations_8}
\end{subfigure}
\caption{Valid configurations of the eight-vertex model.}\label{fig:orientations}
\end{figure}
\renewcommand{\thesubfigure}{\alph{subfigure}}
\captionsetup[subfigure]{labelformat=parens}

Classically, the eight-vertex model
is defined by statistical physicists on a square lattice region where each vertex of the lattice is connected by an edge to four nearest neighbors.
In general, the eight configurations 1 to 8 in \figref{fig:orientations} are associated with eight possible weights $w_1, \ldots, w_8$. 
By physical considerations, the total weight of a state remains unchanged
if  all arrows are flipped,
assuming there is no external electric field.
In this case we write
$w_1 = w_2 = a$, $w_3 = w_4= b$, $w_5 = w_6 = c$, and $w_7 = w_8 = d$.
This complementary invariance is known as \emph{arrow reversal symmetry} or \emph{zero field assumption}.

Even in the zero-field setting, this model is already enormously expressive:
its special case when $d=0$, the zero-field six-vertex model, has sub-models such as the ice ($a = b = c$), KDP, and Rys $F$ models; on the square lattice, some other important models such as the dimer and zero-field Ising models can be reduced to it~\cite{BAXTER1972193}. 
After the eight-vertex model was introduced in 1970 by Sutherland~\cite{doi:10.1063/1.1665111}, and Fan and Wu~\cite{PhysRevB.2.723}, Baxter~\cite{PhysRevLett.26.832, BAXTER1972193} achieved a good understanding of the zero-field case in the thermodynamic limit on the square lattice (in physics it is called ``exactly solved'').

In this paper, we assume the arrow reversal symmetry
and further assume that $a, b, c, d \ge 0$, as is the case in \emph{classical} physics.
Given a  4-regular graph $G$, we label
four incident edges of each vertex
from 1 to 4.
The \emph{partition function} of the eight-vertex model with parameters
$(a, b, c, d)$ on  $G$
is defined as
\begin{equation}\label{Z-defn}
Z_{\textup{\textsc{EightVertex}}}(G; a, b, c, d) = \sum_{\tau \in \mathcal{O}_{\bf e}(G)}a^{n_1 + n_2}b^{n_3 + n_4}c^{n_5 + n_6}d^{n_7 + n_8},
\end{equation}
where $\mathcal{O}_{\bf e}(G)$ is the set of all even orientations of $G$,
and $n_i$ is the number of vertices in type  $i$  in $G$ ($1 \le i \le 8$,
locally depicted as in     
Figure~\ref{fig:orientations}) 
under an even orientation $\tau \in \mathcal{O}_{\bf e}(G)$.

In terms of the exact computational complexity, a complexity dichotomy is given for the eight-vertex model on 4-regular graphs for all eight parameters~\cite{DBLP:journals/corr/CaiF17}.
This is studied in the context of a classification program for the complexity of counting problems~\cite{cai_chen_2017}, where the eight-vertex model serves as important basic cases for Holant problems defined by not necessarily symmetric constraint functions.
It is shown that
every setting is either P-time computable (and some are surprising) or \#P-hard.  
However, most cases for P-time tractability 
are due to nontrivial cancellations.
In our setting where $a, b, c, d$ are nonnegative, the problem of computing the partition function of the eight-vertex model is \#P-hard unless: (1) $a = b = c = d$ (this is equivalent to the unweighted case); (2) at least three of $a, b, c, d$ are zero; or (3) two of $a, b, c, d$ are zero and the other two are equal.
In addition, on planar graphs it is also P-time computable for parameter settings $(a, b, c, d)$ with $a^2 + b^2 = c^2 + d^2$, using the \emph{FKT algorithm}.

Since exact computation is hard in most cases,
one natural question is what is the approximate complexity of counting and sampling of the eight-vertex model.
To our best knowledge, prior to \cite{DBLP:journals/corr/abs-1811-03126}, there is only one
previous result in this regard due to Greenberg and Randall~\cite{Greenberg2010}.
They showed that on square lattice regions
 a specific Markov chain (which flips the orientations of 
all four edges along a uniformly picked face at each step) is torpidly mixing
 when $d$ is 
large. It means that when sinks and sources have
large weights, this particular chain cannot be used to approximately sample eight-vertex configurations on the square lattice according to the Gibbs measure.
Recently, similar torpid mixing results have been achieved for the six-vertex model on the square lattice~\cite{liu:LIPIcs:2018:9456}.

\cite{DBLP:journals/corr/abs-1811-03126} gave the first classification results for the approximate complexity of
the eight-vertex model on general and planar 4-regular graphs,
and they conform to phase transition in physics.
This is an extension to the work on the approximability of the six-vertex model~\cite{doi:10.1137/1.9781611975482.136}.
In order to state the results in \cite{DBLP:journals/corr/abs-1811-03126} and in this work, we adopt the following notations assuming $a, b, c, d \ge 0$.
\vspace{-2mm}
\setlist[itemize]{itemsep=-1mm}
\begin{itemize}
\item
\texttt{DO} = $\{(a,b,c,d) \; | \;
a \le b + c + d, ~~b \le a + c + d,~~ c \le a + b + d,~~d \le a + b + c\}$;
\item
\texttt{$d$-SUM} = $\{(a,b,c,d) \; | \;
a + d \le b + c, ~~b + d \le a + c,~~ c + d \le a + b\}$;
\item
\texttt{SQ-SUM} = $\{(a,b,c,d) \; | \;
a^2 \le b^2 + c^2 + d^2, ~~b^2 \le a^2 + c^2 + d^2,~~ c^2 \le a^2 + b^2 + d^2,~~d^2 \le a^2 + b^2 + c^2\}$.
\end{itemize}
\begin{remark}
\texttt{$d$-SUM} $\subset \texttt{DO}$, \texttt{SQ-SUM} $\subset \texttt{DO}$.
\end{remark}

Physicists have shown an \emph{order-disorder phase transition} for the eight-vertex model on the square lattice between parameter settings outside \texttt{DO} and those inside \texttt{DO} (see Baxter's book~\cite{Baxter:book} for more details).
In \cite{DBLP:journals/corr/abs-1811-03126}, it was shown that: (1) approximating the partition function of the eight-vertex model on general 4-regular graphs outside \texttt{DO} is NP-hard, (2) there is an FPRAS\footnote{Suppose $f: \Sigma^* \rightarrow \mathbb{R}$ is a function mapping problem instances to real numbers. A \textit{fully polynomial randomized approximation scheme (FPRAS)} \cite{Karp:1983:MAE:1382437.1382804} for a problem is a randomized algorithm that takes as input an instance $x$ and $\varepsilon > 0$, running in time polynomial in $n$ (the input length) and $\varepsilon^{-1}$, and outputs a number $Y$ (a random variable) such that
\(\operatorname{Pr}\left[(1 - \varepsilon)f(x) \le Y \le (1 + \varepsilon)f(x)\right] \ge \frac{3}{4}.\)} for general 4-regular graphs in the region \texttt{$d$-SUM} $\bigcap$ \texttt{SQ-SUM}, and (3) there is an FPRAS for planar 4-regular graphs in the extra region $\{(a,b,c,d) \; | \;
a + d \le b + c, ~~b + d \le a + c,~~ c + d \ge a + b\}$ $\bigcap$ \texttt{SQ-SUM}.

\begin{figure}[h!]
\centering\includegraphics[width=0.7\linewidth]{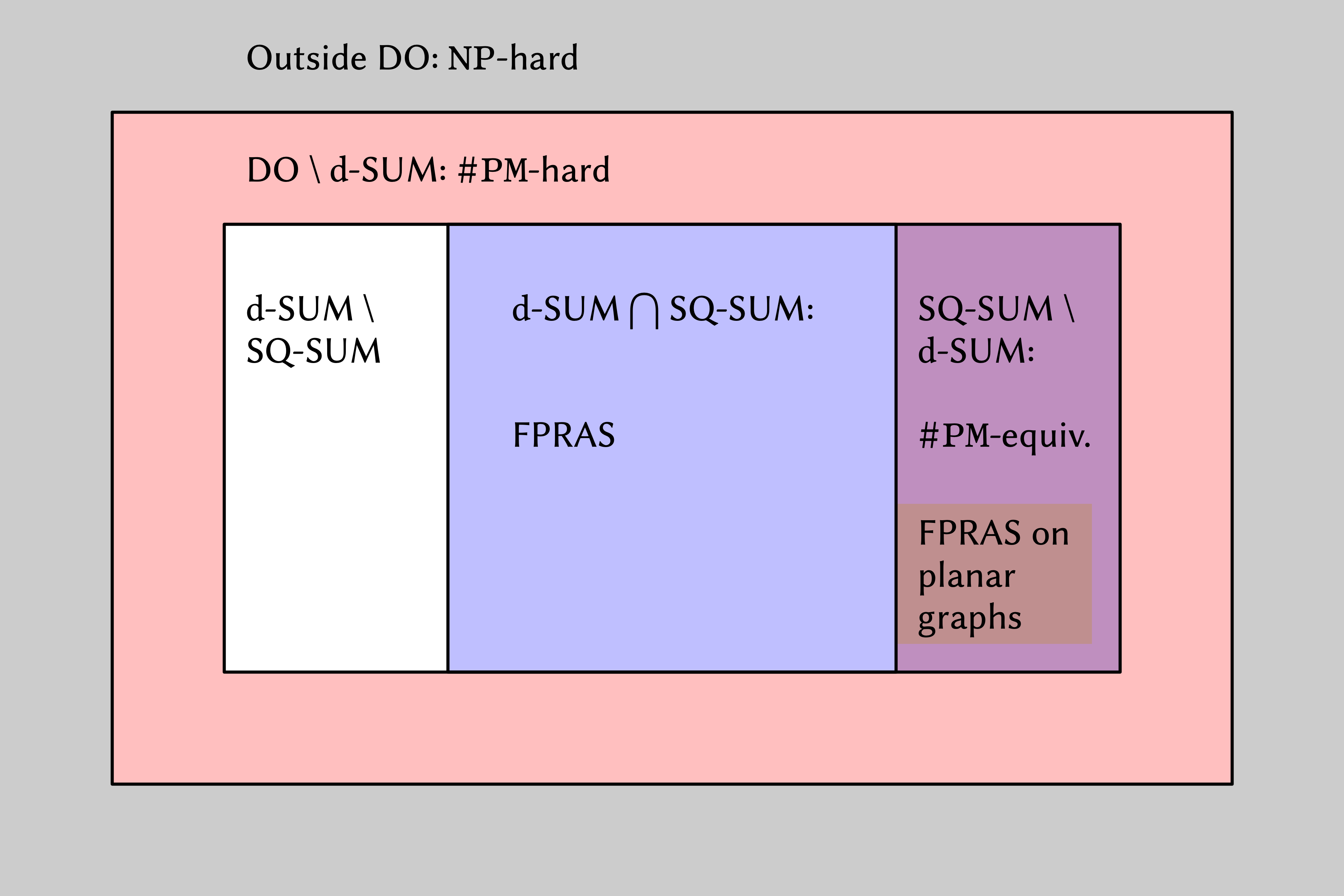}
\caption{A Venn diagram of the approximation complexity of the eight-vertex model.}
\label{fig:complexity_map}
\end{figure}

In this paper we make further progress in the classification program of the approximate complexity of the eight-vertex model on 4-regular graphs in terms of the parameters (see \figref{fig:complexity_map}). For the first time, the complexity of approximating the partition function of the eight-vertex model (\textsc{\#EightVertex($a, b, c, d$)}) is related to that of approximately counting perfect matchings (\textsc{\#PerfectMatchings}).

\begin{theorem}\label{thm:pm-hard}
For any four positive numbers $a,b,c,d >0$ such that
$(a, b, c, d) \not\in \textup{\texttt{$d$-SUM}}$,
the problem \textup{\textsc{\#EightVertex($a, b, c, d$)}} is at least
as hard to approximate as counting perfect matchings:
\[\mbox{
\textup{\textsc{\#PerfectMatchings}} $\le_\textup{AP}$ \textup{\textsc{\#EightVertex($a, b, c, d$)}}.}
\]
\end{theorem}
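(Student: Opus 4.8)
The plan is to view \textsc{\#EightVertex}$(a,b,c,d)$ as the Holant problem $\Holant(f)$ on $4$-regular graphs, where $f$ is the arity-$4$ constraint function supported on even-Hamming-weight inputs with $f(0000)=f(1111)=d$ and with the three ``two-in/two-out'' pair-classes $\{1100,0011\}$, $\{1010,0101\}$, $\{1001,0110\}$ carrying the weights $a,b,c$ in the order fixed by the edge labelling. Two facts will be used throughout. First, any signature obtained by wiring copies of $f$ together has even arity, has support only on even-weight inputs, and has nonnegative entries (as $a,b,c,d\ge 0$); staying inside this class is what keeps every reduction approximation-preserving. Second, relabelling the four edges at a vertex permutes the three pair-classes and thus realizes the full symmetric group on $(a,b,c)$ while fixing $d$, so I may assume without loss of generality that the violated $d$-SUM inequality is the strict one $c+d>a+b$; the strictness will supply the quantitative slack the construction needs.

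The core step is to realize from copies of $f$ a gadget whose signature $g$ is a positive scalar multiple of a $4$-ary matchgate signature. Because $f$-gadgets have only even-weight support, $g$ cannot be (a multiple of) a perfect-matching signature directly; rather $g$ will be the signature of a matchgate with an even number of nodes, whose support is automatically of even weight, consistent with the class above. Once $g$ is a matchgate signature, an instance of $\Holant(g)$ is a weighted perfect-matching count on the graph obtained by substituting the matchgate for each vertex, so $\Holant(g)\le_{\textup{AP}}\textsc{\#PerfectMatchings}$; the substantive part is to show that the $g$ we can reach is generic enough that the reverse also holds, via a gadget reduction of \textsc{\#PerfectMatchings} on cubic graphs into this substituted family, giving $\textsc{\#PerfectMatchings}\le_{\textup{AP}}\Holant(g)$. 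Here a complete description of the nonnegatively weighted $4$-ary matchgate signatures---a ``geometric'' lemma cutting out that set---is the natural tool: it says which $g$ can arise at all, and it certifies that in the regime $c+d>a+b$ the signature $f$ lies strictly outside the matchgate locus, so that a suitable $g$ can be pulled out---either by a small explicit gadget, or, if no direct gadget is available, as the limit of a one-parameter family of $f$-gadgets recovered by the usual polynomial-interpolation argument. Composing yields $\textsc{\#PerfectMatchings}\le_{\textup{AP}}\Holant(g)\le_{\textup{AP}}\Holant(f)=\textsc{\#EightVertex}(a,b,c,d)$.

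I expect the extraction of $g$ to be the main obstacle: the gadget must at once (i) use only nonnegative weights, since no cancellation is available; (ii) produce an even-node matchgate signature; and (iii) produce one whose Holant is as hard as \textsc{\#PerfectMatchings}, neither trivial nor merely \#P-hard---while the only lever is $c+d>a+b$ together with whatever remains of $a+d\le b+c$ and $b+d\le a+c$. I anticipate a case split according to how many of the latter two also fail, with different gadgets for ``all three fail'' and for ``exactly one fails''; the borderline $c+d=a+b$ is excluded because $(a,b,c,d)\notin\textup{\texttt{$d$-SUM}}$ is an open condition, so it requires no separate treatment. A last, routine point is to check that the matchgate substitution and any interpolation keep the partition functions polynomially related in bit-length, up to an efficiently computable scalar, so that an FPRAS for \textsc{\#EightVertex}$(a,b,c,d)$ really does give one for \textsc{\#PerfectMatchings}.
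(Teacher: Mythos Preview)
Your strategy diverges from the paper's and contains a genuine gap. The paper does \emph{not} try to reach a matchgate signature from $f$. Instead it applies the holographic transformation $Z=\frac{1}{\sqrt 2}\left[\begin{smallmatrix}1&1\\ i&-i\end{smallmatrix}\right]$, converting $\Holant(\neq_2\mid f)$ into $\Holant(\hat f)$ where $\hat f$ has entries $w=a{+}b{+}c{+}d$ and $x=-a{+}b{+}c{-}d$ on the ``corner'' positions; when $a+d>b+c$ one has $w/x<-1$. A gadget argument then approximates the arity-$4$ equality, collapsing $\hat f$ to the degenerate signature $\left[\begin{smallmatrix}w&0&0&x\\0&0&0&0\\0&0&0&0\\x&0&0&w\end{smallmatrix}\right]$, which is identified with the \emph{antiferromagnetic Ising} partition function $\textsc{Ising}(w/x)$ on the crossing-circuit graph. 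The known equivalence $\textsc{\#PerfectMatchings}\equiv_\textup{AP}\textsc{Ising}(\beta)$ for $\beta<-1$ (Goldberg--Jerrum) supplies the lower bound. The whole proof lives in the transformed problem with \emph{negative} entries; that is where the connection to \textsc{\#PerfectMatchings} comes from.

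Your plan has three concrete problems. First, the assertion that $c+d>a+b$ forces $f$ outside the matchgate locus is false: the nonnegative $4$-ary matchgate signatures form exactly \texttt{SQ-SUM}, not \texttt{$d$-SUM}, and $\texttt{SQ-SUM}\setminus\texttt{$d$-SUM}$ is nonempty (e.g.\ $(a,b,c,d)=(2.1,1.5,1.5,1)$). Second, even if you produced a matchgate signature $g$, that only yields $\Holant(g)\le_\textup{AP}\textsc{\#PerfectMatchings}$; you give no mechanism for the reverse inequality. ``A gadget reduction of \textsc{\#PerfectMatchings} on cubic graphs into this substituted family'' is not a proof---you would need to know, independently, that some specific matchgate-realizable $g$ has $\textsc{\#PerfectMatchings}\le_\textup{AP}\Holant(g)$, and no such statement is available without already proving the theorem. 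Third, polynomial interpolation is an \emph{exact}-computation technique and does not in general yield AP-reductions, so ``recover $g$ as a limit by interpolation'' is not admissible here. (There is also a minor framing issue: \textsc{\#EightVertex} is $\Holant(\neq_2\mid f)$, not $\Holant(f)$; on $4$-regular graphs these differ by the holographic transformation above, which is precisely what introduces the negative weight needed for the Ising route.)
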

\begin{remark}
The theorem is stated for the case where all four parameters are positive.
The same proof also works for the case when there is exactly one zero
among the nonnegative values $\{a,b,c\}$. A complete account for
four nonnegative values $\{a,b,c,d\}$ is given in the \appref{app:complexity_table}.
\end{remark}

The proof of \thmref{thm:pm-hard} is in \secref{sec:hardness}.
Our proof for the hardness result has several ingredients:
\vspace{-2mm}
\setlist[enumerate]{itemsep=-1mm}
\begin{enumerate}[(1)]
\item
We express the eight-vertex model on a 4-regular graph $G$ as an edge-2-coloring problem on $G$ using Valiant's \emph{holographic transformation}~\cite{Valiant:2008:HA:1350684.1350697}.
\item
We show that some special cases in this edge-2-coloring problem on $G$ is equivalent to the zero-field Ising model on its \emph{crossing-circuit graph} $\tilde{G}$. Thus known \#\textsc{PerfectMatchings}-equivalence result for the Ising model~\cite[Lemma 7]{GOLDBERG2008908} directly transfers to the special cases under certain parameter settings.
\item
We further show that for any parameter setting outside \texttt{$d$-SUM}, approximating the partition function of the eight-vertex model is at least as hard as the \#\textsc{PerfectMatchings}-equivalent special cases of the edge-2-coloring problem via \emph{approximation-preserving reductions} (introduced in \cite{Dyer2004}). 
\end{enumerate}

\begin{theorem}\label{thm:pm-easy}
For any $(a, b, c, d) \in \textup{\texttt{SQ-SUM}}$,
\[\mbox{
\textup{\textsc{\#EightVertex($a, b, c, d$)}} $\le_\textup{AP}$ \textup{\textsc{\#PerfectMatchings}}.}
\]
\end{theorem}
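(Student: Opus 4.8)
The plan is to view the eight-vertex model as a Holant problem and to show that, whenever $(a,b,c,d)\in\texttt{SQ-SUM}$, its $4$-ary vertex constraint can be simulated by a matchgate with nonnegative edge weights. Gluing one copy of this matchgate at each vertex of $G$, joined along every edge by a weight-$1$ connector, then rewrites $Z_{\textup{\textsc{EightVertex}}}(G;a,b,c,d)$ as a weighted count of perfect matchings of a single (not necessarily planar) graph with nonnegative weights; since counting weighted perfect matchings with polynomially bounded nonnegative weights reduces to \textsc{\#PerfectMatchings} under approximation-preserving reductions, this gives the claimed reduction.

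First I would put the partition function into Holant form. Assigning to each edge a Boolean variable that records its orientation yields $Z_{\textup{\textsc{EightVertex}}}(G;a,b,c,d)=\Holant(=_2 \mid f)$ on $G$, where $f$ is the $4$-ary \emph{eight-vertex signature}: it is supported on inputs of even Hamming weight, has $f_{0000}=f_{1111}=d$ (the source and sink configurations), and its six weight-$2$ entries are $a,a,b,b,c,c$ in the pattern dictated by the edge labelling, consistent with the arrow-reversal symmetry $f_x=f_{\bar x}$. A few edges may need to be read with the reversed orientation convention; this only inserts the binary disequality $\neq_2$, which is itself a matchgate and hence harmless.

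The heart of the argument is a characterization of which nonnegative $4$-ary signatures $g$ are realizable by a matchgate with nonnegative edge weights --- that is, for which $g$ there is a nonnegatively weighted gadget $\Gamma$ with four labelled external nodes such that $g(x)$ is the (weighted) number of perfect matchings of $\Gamma$ with the external nodes selected by $x$ deleted. The subtlety, and the reason this has stayed open, is that nonnegative realizability is a full-dimensional semialgebraic condition rather than the usual Pfaffian matchgate identities: a non-planar gadget can turn positive edge weights into a signature that violates those identities, so the nonnegatively realizable class is strictly larger than the ``identity-satisfying'' one. The \emph{geometric lemma} is exactly the step that pins this class down, and I expect it to do so by encoding the entries of the target signature as the data of a small geometric configuration (points constrained to lie on a sphere, say, or vectors with prescribed squared lengths) that exists precisely when a system of ``sum-of-squares'' inequalities holds, and from whose coordinates the edge weights of an explicit bounded-size gadget can be read off. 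This is where essentially all the work and all the difficulty lie; everything after it is bookkeeping.

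Finally I would specialize the characterization to $f$. The single quadratic matchgate identity that the eight-vertex signature can satisfy reads, after the relabelling above, $a^2+b^2=c^2+d^2$ --- the locus on which, on planar graphs, the FKT algorithm already applies --- and the realizability conditions produced by the geometric lemma, once specialized to $f$, turn out to be exactly the inequalities $a^2\le b^2+c^2+d^2$ together with its cyclic variants, i.e.\ $(a,b,c,d)\in\texttt{SQ-SUM}$; the auxiliary gadget edge weights involve square roots of quantities such as $b^2+c^2+d^2-a^2$, which are real precisely on \texttt{SQ-SUM}. Replacing every vertex of $G$ by this gadget and every edge by a weight-$1$ connector then expresses $Z_{\textup{\textsc{EightVertex}}}(G;a,b,c,d)$ as a perfect-matching sum of one graph with polynomially bounded nonnegative weights (rational after a harmless rounding that is absorbed into the approximation error), and the standard reduction of weighted perfect-matching counts to \textsc{\#PerfectMatchings} completes the proof.
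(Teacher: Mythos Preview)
Your high-level plan matches the paper's: express the partition function as a Holant problem, realize the vertex signature $f$ by a constant-size nonnegatively weighted matchgate whenever $(a,b,c,d)\in\texttt{SQ-SUM}$, and then invoke the standard AP-reduction from nonnegatively weighted perfect-matching counts to \textsc{\#PerfectMatchings}. However, the proposal does not actually carry out the step you yourself flag as ``where essentially all the work and all the difficulty lie,'' and your guess at its content is off in a way that matters.

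Two concrete points. First, the Holant encoding: assigning one Boolean per edge of $G$ and writing $\Holant(=_2\mid f)$ is not the right bipartite form. The paper works on the edge--vertex incidence graph $G'$ with the binary \emph{disequality} $(\neq_2)$ on the edge side, so that $Z_{\textsc{EightVertex}}(G;a,b,c,d)=\Holant(G';\neq_2\mid f)$; the $(\neq_2)$ is then realized as the trivial two-dangling-edge matchgate. This is a fixable slip, but as written your formulation does not type-check.

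Second, and more substantively, the geometric lemma is not a ``points on a sphere / square roots of $b^2+c^2+d^2-a^2$'' construction. The paper's matchgate is a weighted $K_6$ (with degenerate boundary cases handled by $K_4$ and a six-vertex variant), and the nontrivial content is a Minkowski-sum lemma in $\mathbb{R}^3$: with $A,B,C,D$ the half-sums $\frac12(\pm a^2\pm b^2\pm c^2\pm d^2)$ (one minus sign each), the open region $\{A,B,C,D>0,\ \text{each}<\text{sum of the other three}\}$ with $D$ normalized to $1$ is shown to equal $V+W$, where $V$ is the open standard $2$-simplex and $W$ is the open cone $\{y+z>x,\ x+z>y,\ x+y>z\}$. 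This decomposition is exactly what lets one choose the $K_6$ edge weights nonnegatively; the ``square-root of a slack'' heuristic does not produce a valid gadget in the interior, where no single matchgate identity holds. So while your outline is correct, the missing piece is precisely the construction, and your sketch of it would not succeed as stated.
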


The proof of \thmref{thm:pm-easy} is in \secref{sec:easiness}.
To prove the easiness result, we again express the eight-vertex model in the Holant framework (see \secref{sec:prelim}) and show that the constraint functions of the eight-vertex model in \texttt{SQ-SUM} can be implemented by constant-size \emph{matchgates} with nonnegatively weighted edges (\defnref{defn:matchgate}).
We note that allowing nonnegative edge-weights does not add more computational power~\cite[Proposition 5]{DBLP:journals/corr/abs-1301-2880}.
The crucial ingredient of our proof is a \emph{geometric} lemma (\lemref{lem:pm-easy_geometry}) in 3-dimensional space.

Our result is tight in the sense that no constraint functions of the eight-vertex model with parameter settings outside the region \texttt{SQ-SUM} can be implemented by a matchgate (\lemref{lem:pm-easy_tight}). 
Moreover, the general version of our result also works for the eight-vertex model without the arrow reversal symmetry.
It is open if computing the partition function in $\texttt{DO} \setminus (\texttt{$d$-SUM} \bigcup \texttt{SQ-SUM})$ is \textsc{\#PerfectMatchings}-equivalent or not.

As part of this work, we give a complete characterization of the constraint functions that can be expressed by 4-ary matchgates in \thmref{thm:matchgate}. This solves an important question that has been open for several years~\cite{DBLP:journals/corr/abs-1301-2880, BULATOV201711}.
We believe it is of independent interest.

\begin{corollary}\label{cor:pm-equiv}
For any four positive numbers $a,b,c,d >0$ such that
$(a, b, c, d) \in \textup{\texttt{SQ-SUM}} \setminus \textup{\texttt{$d$-SUM}}$,
\[\mbox{
\textup{\textsc{\#EightVertex($a, b, c, d$)}} $\equiv_\textup{AP}$ \textup{\textsc{\#PerfectMatchings}}.}
\]
\end{corollary}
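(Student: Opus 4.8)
The plan is to derive Corollary~\ref{cor:pm-equiv} immediately by combining the two main theorems, since the hypothesis $(a,b,c,d) \in \texttt{SQ-SUM} \setminus \texttt{$d$-SUM}$ is precisely the conjunction of ``$(a,b,c,d) \notin \texttt{$d$-SUM}$'' (the hypothesis of \thmref{thm:pm-hard}) and ``$(a,b,c,d) \in \texttt{SQ-SUM}$'' (the hypothesis of \thmref{thm:pm-easy}). Before doing so I would first check that this region is nonempty over positive quadruples, so that the statement is not vacuous: for instance $(a,b,c,d) = (1,1,\tfrac32,\tfrac32)$ violates $c+d \le a+b$, hence lies outside \texttt{$d$-SUM}, while $c^2 = d^2 = \tfrac94 \le 1 + 1 + \tfrac94$ and the two remaining square-sum inequalities hold trivially, so it lies in \texttt{SQ-SUM}.

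For the hardness direction, since all of $a,b,c,d$ are positive and $(a,b,c,d) \notin \texttt{$d$-SUM}$, \thmref{thm:pm-hard} applies directly and gives $\textsc{\#PerfectMatchings} \le_\textup{AP} \textsc{\#EightVertex($a,b,c,d$)}$. For the easiness direction, since $(a,b,c,d) \in \texttt{SQ-SUM}$, \thmref{thm:pm-easy} gives $\textsc{\#EightVertex($a,b,c,d$)} \le_\textup{AP} \textsc{\#PerfectMatchings}$; note that \thmref{thm:pm-easy} already holds for all of \texttt{SQ-SUM}, so the strict positivity we assume here is not even needed for this half. Since AP-reducibility is reflexive and transitive (see \cite{Dyer2004}) and hence composes into an equivalence, the two one-directional reductions together yield $\textsc{\#EightVertex($a,b,c,d$)} \equiv_\textup{AP} \textsc{\#PerfectMatchings}$, which is exactly the assertion of the corollary.

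The main point to be careful about — rather than a genuine obstacle — is matching the hypotheses of the two theorems on the overlap region: \thmref{thm:pm-hard} is stated with all four parameters strictly positive (which is why the corollary also assumes this), whereas \thmref{thm:pm-easy} needs only nonnegativity, so there is no additional mathematical content to supply beyond the two theorems. If one wished to extend the corollary to quadruples with a zero among $\{a,b,c,d\}$, one would instead invoke the appendix version of \thmref{thm:pm-hard} (cf.\ the remark following it and \appref{app:complexity_table}) in place of the stated form.
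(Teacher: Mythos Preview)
Your proposal is correct and matches the paper's approach: the corollary is stated in the paper without a separate proof precisely because it is an immediate combination of \thmref{thm:pm-hard} and \thmref{thm:pm-easy}, exactly as you argue. Your nonemptiness check and remarks about the positivity hypothesis are additional care that the paper does not spell out but are entirely appropriate.
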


Note that for the eight-vertex model in the region $\{(a,b,c,d) \; | \;
a + d \le b + c, ~~b + d \le a + c,~~ c + d > a + b\}$ $\bigcap$ \texttt{SQ-SUM}, computing $Z_{\textup{\textsc{EightVertex}}}(a, b, c, d)$ is (1) \#P-complete in exact computation~\cite{DBLP:journals/corr/CaiF17}, (2) \textsc{\#PerfectMatchings}-equivalent in approximate computation on general 4-regular graphs (\corref{cor:pm-equiv}), and (3) admits an FPRAS in approximate computation on planar 4-regular graphs~\cite{DBLP:journals/corr/abs-1811-03126}.
To our best knowledge, these are the first identified problems of this kind.

\section{Preliminaries}\label{sec:prelim}


Given a 4-regular graph $G = (V, E)$,
the \emph{edge-vertex incidence graph}  $G' = (U_E, U_V, E')$
is a bipartite graph where
$(u_e, u_v) \in U_E \times U_V$ is an edge in $E'$ iff 
$e \in E$ in $G$  is incident to $v \in V$.
We model an orientation ($w \rightarrow v$)
 on an edge $e = \left\{w, v\right\} \in E$
 from $w$ into $v$ in $G$ by assigning
 $1$  to $(u_e, u_w) \in E'$  and  $0$ to $(u_e, u_v) \in E'$
in $G'$.
A configuration of the eight-vertex model on $G$
is  an  \emph{edge 2-coloring} on $G'$,
namely $\sigma: E' \rightarrow \{0, 1\}$,
where for  each  $u_e \in U_E$ its two incident edges are
assigned 01 or 10, and  for  each $u_v \in U_V$ the  sum of
 values $\sum_{i=1}^4 \sigma(e_i) \equiv 0  \pmod 2$,
over the four incident edges of $u_v$.
Thus 
we model the even orientation rule of $G$ on all $v \in V$ by requiring ``two-0-two-1/four-0/four-1'' locally at
each vertex $u_v \in U_V$.

The ``one-0-one-1'' requirement on the two edges incident to a vertex in $U_E$ is a binary {\sc Disequality} constraint, denoted by $(\neq_2)$.
The values of a 4-ary \emph{constraint function} $f$  can be listed in a matrix $M(f) = \left[\begin{smallmatrix} f_{0000} & f_{0010} & f_{0001} & f_{0011} \\ f_{0100} & f_{0110} & f_{0101} & f_{0111} \\ f_{1000} & f_{1010} & f_{1001} & f_{1011} \\ f_{1100} & f_{1110} & f_{1101} & f_{1111}\end{smallmatrix}\right]$,
called the \emph{constraint matrix} of $f$. For the eight-vertex model 
satisfying the even orientation rule and arrow reversal symmetry, the constraint function $f$ at every vertex $v \in U_V$ in $G'$ 
has the form $M(f) = \left[\begin{smallmatrix} d & 0 & 0 & a \\ 0 & b & c & 0 \\ 0 & c & b & 0 \\ a & 0 & 0 & d \end{smallmatrix}\right]$, if we locally index the left, down, right, and up edges incident to $v$ by 1, 2, 3, and 4, respectively according to \figref{fig:orientations}.
Thus computing the partition function $Z_{\textup{\textsc{EightVertex}}}(G; a, b, c, d)$ is equivalent to evaluating 
\[Z'(G'; f) := \sum_{\sigma:E'\rightarrow\left\{0,1\right\}}\prod_{u\in U_E}(\neq_2)\left(\sigma |_{E'(u)}\right) \prod_{u\in U_V}f\left(\sigma |_{E'(u)}\right).\]
where $E'(u)$ denotes the incident edges of $u \in U_E \cup U_V$.
In fact, in this way we express the partition function of the eight-vertex model as the Holant sum in the framework for Holant problems:
\[Z_{\textup{\textsc{EightVertex}}}(G; a, b, c, d) = \textup{Holant}\left(G'; \neq_2 |\ f\right)\]
where we use $\textup{Holant}(H; g\ |\ f)$ to denote the Holant sum 
$\sum_{\sigma:E\rightarrow\left\{0,1\right\}}\prod_{u\in U}g\left(\sigma |_{E(u)}\right) \prod_{u\in V}f\left(\sigma |_{E(u)}\right)$
on a bipartite graph $H = (U, V, E)$ for the Holant problem $\textup{Holant}(g\ |\ f)$.
Each vertex in $U$ (or $V$) is assigned the constraint function $g$ (or $f$, respectively).
The constraint function $g$ is considered as a row vector (or covariant tensor), whereas the signature $f$ is considered as a column vector (or contravariant tensor).
(See \cite{cai_chen_2017} for more on Holant problems.)
The following proposition says that an invertible holographic transformation does not change the complexity of the Holant problem in the bipartite setting.
\begin{proposition}[\cite{Valiant:2008:HA:1350684.1350697}]\label{prop:holo_trans}
Suppose $T \in \mathbb{C}^2$ is an invertible matrix. Let $d_1 = \operatorname{arity}(g)$ and $d_2 = \operatorname{arity}(f)$. Define $g' = g \left(T^{-1}\right)^{\otimes d_1}$ and $f' = T^{\otimes d_2} f$.
Then for any bipartite graph $H$, $\textup{Holant}(H; g\ |\ f) = \textup{Holant}(H; g'\ |\ f')$.
\end{proposition}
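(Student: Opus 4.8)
The plan is to read the Holant sum as a tensor-network contraction, insert the identity $I_2 = T^{-1}T$ along every edge, and regroup the factors so that each $T^{-1}$ merges into the adjacent covariant tensor $g$ and each $T$ merges into the adjacent contravariant tensor $f$.

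First I would rewrite $\operatorname{Holant}(H; g\ |\ f)$ by splitting each edge variable. For $H = (U, V, E)$, introduce for every edge $e \in E$ two Boolean variables $x_e$ (read by its endpoint in $U$) and $y_e$ (read by its endpoint in $V$), tied together by a binary equality constraint $[x_e = y_e]$. Summing out the $y$-variables forces $y_e = x_e$, so
\[
\operatorname{Holant}(H; g\ |\ f) \;=\; \sum_{x,\,y}\ \prod_{e\in E}[x_e = y_e]\ \prod_{u\in U} g\!\left(x|_{E(u)}\right)\ \prod_{v\in V} f\!\left(y|_{E(v)}\right).
\]
Next, since $T$ is invertible, the $2\times 2$ identity factors as $\delta_{pq} = \sum_{r\in\{0,1\}}(T^{-1})_{pr}\,T_{rq}$; introducing a fresh Boolean variable $z_e$ for each edge and substituting $[x_e = y_e] = \sum_{z_e}(T^{-1})_{x_e z_e}\,T_{z_e y_e}$ turns the sum into $\sum_{x,y,z}\prod_{e}(T^{-1})_{x_e z_e}\,T_{z_e y_e}\prod_u g(x|_{E(u)})\prod_v f(y|_{E(v)})$. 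This is the one and only place invertibility is used.

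Then I would perform the sums over $x$ and over $y$ independently, vertex by vertex. Fixing $z$ and a vertex $u\in U$ with incident edges $e_1,\dots,e_{d_1}$, summing over $x_{e_1},\dots,x_{e_{d_1}}$ yields $\sum_{x|_{E(u)}} g(x|_{E(u)})\prod_{i}(T^{-1})_{x_{e_i} z_{e_i}}$, which is exactly $\bigl(g\,(T^{-1})^{\otimes d_1}\bigr)(z|_{E(u)}) = g'(z|_{E(u)})$; here it matters that $g$ is a covariant tensor, so $(T^{-1})^{\otimes d_1}$ acts on the output side and the $z_{e_i}$ appears as the second index of $T^{-1}$. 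Symmetrically, summing over the $y$-variables at a vertex $v\in V$ of arity $d_2$ gives $\bigl(T^{\otimes d_2} f\bigr)(z|_{E(v)}) = f'(z|_{E(v)})$, using that $f$ is contravariant and $z_e$ appears as the first index of $T$. What remains is $\sum_{z}\prod_u g'(z|_{E(u)})\prod_v f'(z|_{E(v)}) = \operatorname{Holant}(H; g'\ |\ f')$, which finishes the proof.

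There is no substantive obstacle here; the proof is a direct bipartite instance of Valiant's Holant theorem. The only point requiring care is the index bookkeeping — making sure the covariant role of $g$ and the contravariant role of $f$ are respected so that on each edge the inserted $T^{-1}$ and $T$ contract in the order $T^{-1}T$ and genuinely cancel. One should also note that $T$ being a fixed constant matrix means $g'$ and $f'$ have the same arities as $g$ and $f$, so $H$ itself is unchanged.
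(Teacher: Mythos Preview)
Your argument is correct and is exactly the standard tensor-contraction proof of Valiant's Holant theorem. The paper itself does not supply a proof of this proposition; it is stated as a citation to~\cite{Valiant:2008:HA:1350684.1350697}, so there is nothing further to compare.
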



\section{\#\textsc{PerfectMatchings}-hardness}\label{sec:hardness}

Our proof strategy for \thmref{thm:pm-hard} is as follows.
In \lemref{lem:pm-hard_holant}, 
we express the eight-vertex model on a 4-regular graph $G$  
as a Holant problem; this is an equivalent form of the orientation
problem expressed as an edge-2-coloring problem on $G$, and is achieved using 
a holographic transformation.
In \lemref{lem:pm-hard_base}, we establish the equivalence between some special cases of this edge-2-coloring problem and the zero-field Ising model. Thus a known result for the Ising model (\propref{thm:ising}) indicates the \#\textsc{PerfectMatchings}-equivalence of the special cases under certain parameter settings (\corref{cor:pm-hard_base}).
It follows from \lemref{lem:pm-hard_general} that for any $(a, b, c, d)$ with $a + d > b + c$ (and symmetrically  $b + d > a + c$ or $c + d > a + b$), approximately computing the partition function is at least as hard as the \#\textsc{PerfectMatchings}-equivalent special cases under approximation-preserving reductions.

\begin{lemma}\label{lem:pm-hard_holant}
$2^{|V(G)|} \cdot Z_{\textup{\textsc{EightVertex}}}(G; a, b, c, d) = \textup{Holant}\left(G; \left[\begin{smallmatrix} a + b + c + d & 0 & 0 & - a + b + c - d \\ 0 & a - b + c - d & a +b - c - d & 0 \\ 0 & a +b - c - d & a - b + c - d & 0 \\ - a + b + c - d & 0 & 0 & a + b + c + d \end{smallmatrix}\right]\right)$.
\end{lemma}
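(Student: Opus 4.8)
The plan is to deduce the identity from \propref{prop:holo_trans} by a single holographic transformation, exploiting the fact that a degree-$2$ \textsc{Equality} node merely identifies the two half-edges it joins. Recall from \secref{sec:prelim} that $Z_{\textup{\textsc{EightVertex}}}(G;a,b,c,d)=\textup{Holant}\left(G';\ \neq_2\ |\ f\right)$ on the edge--vertex incidence graph $G'$, with constraint matrix $M(f)=\left[\begin{smallmatrix} d & 0 & 0 & a \\ 0 & b & c & 0 \\ 0 & c & b & 0 \\ a & 0 & 0 & d \end{smallmatrix}\right]$ at every vertex node and $\neq_2$ at every edge node. I would apply \propref{prop:holo_trans} with the invertible matrix $Z=\tfrac{1}{\sqrt 2}\left(\begin{smallmatrix}1 & 1\\ i & -i\end{smallmatrix}\right)$. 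A direct $2\times 2$ computation gives $Z^{\mathsf T}Z=\left(\begin{smallmatrix}0&1\\1&0\end{smallmatrix}\right)$, which is precisely the matrix of $\neq_2$; hence the transformed binary constraint $(\neq_2)(Z^{-1})^{\otimes 2}$ equals $=_2$. After the transformation every node of $U_E$ in $G'$ carries $=_2$ and has degree $2$, so contracting all of them turns $G'$ back into $G$, and we obtain
\[
Z_{\textup{\textsc{EightVertex}}}(G;a,b,c,d)=\textup{Holant}\!\left(G';\ \neq_2\ |\ f\right)=\textup{Holant}\!\left(G;\ Z^{\otimes 4}f\right).
\]

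It remains to identify $Z^{\otimes 4}f$ with a scalar multiple of the matrix in the statement. Writing $Z=\tfrac{1}{\sqrt 2}DH$ with $D=\operatorname{diag}(1,i)$ and $H=\left(\begin{smallmatrix}1&1\\1&-1\end{smallmatrix}\right)$, the map $f\mapsto Z^{\otimes 4}f$ is, up to the global factor $(\tfrac{1}{\sqrt 2})^4=\tfrac14$, a Hadamard transform followed by multiplication of the coordinate indexed by $(j_1,j_2,j_3,j_4)$ by $i^{\,j_1+j_2+j_3+j_4}$. Since $f$ is supported on the eight strings $0000,1111$ (value $d$), $0011,1100$ (value $a$), $0110,1001$ (value $b$), $0101,1010$ (value $c$), each Hadamard coordinate is a signed combination of $a,b,c,d$; the complementation symmetry $f_x=f_{\bar x}$ forces every odd-weight coordinate to vanish, the two weight-$0$/weight-$4$ coordinates (phase $i^0=1$) contribute $\tfrac12(a+b+c+d)$, and the six weight-$2$ coordinates (phase $i^2=-1$) contribute the off-diagonal entries $\tfrac12(\pm a\pm b\pm c\pm d)$. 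Evaluating all sixteen coordinates gives $Z^{\otimes 4}f=\tfrac12 M$, where $M$ is the displayed matrix (equivalently, $\left(\begin{smallmatrix}1&1\\i&-i\end{smallmatrix}\right)^{\otimes 4}f=2M$).

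Combining the two steps, $Z_{\textup{\textsc{EightVertex}}}(G;a,b,c,d)=\textup{Holant}(G;\tfrac12 M)=2^{-|V(G)|}\,\textup{Holant}(G;M)$, since each of the $|V(G)|$ vertex constraints contributes a factor $\tfrac12$ while the contracted degree-$2$ equality nodes contribute nothing; rearranging gives the claim.

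I do not expect a conceptual obstacle: the only real work is the explicit evaluation of $Z^{\otimes 4}f$, and the factorization $Z\propto DH$ together with the two symmetries of $f$ (even support, complementation invariance) already pins down the block-anti-diagonal shape of the answer, so the computation reduces to determining the four distinct nonzero values. The point to watch is scalar bookkeeping: the $\tfrac{1}{\sqrt 2}$ in $Z$ contributes $(\tfrac{1}{\sqrt 2})^4=\tfrac14$ per vertex, which is exactly what turns $\left(\begin{smallmatrix}1&1\\i&-i\end{smallmatrix}\right)^{\otimes 4}f=2M$ into $Z^{\otimes 4}f=\tfrac12 M$ and ultimately produces the factor $2^{|V(G)|}$, whereas the equality-node contractions, being pure wire merges, introduce no scalar.
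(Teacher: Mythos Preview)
Your proof is correct and follows essentially the same approach as the paper: express the partition function as $\textup{Holant}(G';\neq_2\mid f)$, apply the holographic transformation by $Z=\tfrac{1}{\sqrt 2}\left(\begin{smallmatrix}1&1\\i&-i\end{smallmatrix}\right)$ to turn $\neq_2$ into $=_2$ and contract back to $G$, and then identify $Z^{\otimes 4}f$ with $\tfrac12$ times the displayed matrix. The only difference is cosmetic: the paper simply states ``a direct calculation shows'' the form of $M(Z^{\otimes 4}f)$, whereas you organize that calculation via the factorization $Z=\tfrac{1}{\sqrt 2}DH$ and the symmetries of $f$, which is a nice way to see why the odd-weight entries vanish and why only four distinct values appear.
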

\begin{proof}
Using the binary disequality function $(\not=_2)$ for
the orientation of any edge, we can express  the partition function of the eight-vertex model $G$ as a Holant problem on its edge-vertex incidence graph $G'$,
\[Z_{\textup{\textsc{EightVertex}}}(G; a, b, c, d) = \textup{Holant}\left(G'; \neq_2 |\ f\right),\]
where $f$ is the 4-ary constraint function with $M(f) = \left[\begin{smallmatrix} d & 0 & 0 & a \\ 0 & b & c & 0 \\ 0 & c & b & 0 \\ a & 0 & 0 & d \end{smallmatrix}\right]$.
According to \propref{prop:holo_trans}, under a $Z$-transformation where $Z = \frac{1}{\sqrt{2}}\left[\begin{smallmatrix} 1 & 1 \\ i & -i \end{smallmatrix}\right]$, we have
\begin{align*}
\textup{Holant}\left(G'; \neq_2 |\ f\right)
& = \textup{Holant}\left(G'; \neq_2 \cdot \left( Z^{-1} \right)^{\otimes 2} |\ Z^{\otimes 4} \cdot f\right) \\
& = \textup{Holant}\left(G'; =_2 |\ Z^{\otimes 4} f\right)\\
& = \textup{Holant}\left(G; \ Z^{\otimes 4} f\right),
\end{align*}
and a direct calculation shows that $M(Z^{\otimes 4} f) = \frac{1}{2}
\left[\begin{smallmatrix} a + b + c + d & 0 & 0 & - a + b + c - d \\ 0 & a - b + c - d & a +b - c - d & 0 \\ 0 & a +b - c - d & a - b + c - d & 0 \\ - a + b + c - d & 0 & 0 & a + b + c + d \end{smallmatrix}\right]$.
\end{proof}

\begin{algorithm}[H]
\Problem{\textsc{Ising}$(\beta)$.}
\Instance{Graph $G = (V, E)$.}
\KwOut{$Z_\textup{\textsc{Ising}}(G; \beta) := \sum\limits_{\sigma: V \rightarrow \{0, 1\}} \beta^{\textup{mono}(\sigma)}$, where $\textup{mono}(\sigma)$ denotes the number of
edges $\{u, v\}$ such that  $\sigma(u) = \sigma(v)$.}
\end{algorithm}

\begin{proposition}[{\cite[Lemma 7]{GOLDBERG2008908}}]\label{thm:ising}
Suppose $\beta < -1$. Then $\textup{\textsc{\#PerfectMatchings}} \equiv_\textup{AP} \textup{\textsc{Ising}}(\beta)$.
\end{proposition}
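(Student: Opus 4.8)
The plan is to reconstruct the argument behind \cite[Lemma 7]{GOLDBERG2008908} by routing both directions through the \emph{high-temperature (even-subgraph) expansion} of the Ising partition function. Writing $\beta^{[\sigma_u=\sigma_v]}=\frac{\beta+1}{2}+\frac{\beta-1}{2}\,\sigma_u\sigma_v$ for $\sigma\in\{\pm1\}^{V}$ and expanding the product over edges gives
\[
Z_{\textup{\textsc{Ising}}}(G;\beta)\;=\;\bigl(\tfrac{\beta+1}{2}\bigr)^{|E(G)|}2^{|V(G)|}\cdot\mathrm{EG}(G;x),
\]
where $\mathrm{EG}(G;x):=\sum_{S}x^{|S|}$, with $S$ ranging over the edge subsets of $G$ in which every vertex has even degree, and $x=\frac{\beta-1}{\beta+1}$. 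For $\beta<-1$ we have $x>1>0$, so $\mathrm{EG}(G;x)\ge1>0$, while the scalar prefactor is nonzero with sign $(-1)^{|E(G)|}$ and an explicitly computable magnitude. Hence \textup{\textsc{Ising}}$(\beta)$ is AP-interreducible with the problem of approximately evaluating $\mathrm{EG}(\cdot\,;x)$, and it suffices to show that this problem is AP-equivalent to \textup{\textsc{\#PerfectMatchings}}.

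For the ``easy'' direction --- evaluating $\mathrm{EG}(\cdot\,;x)$ reduces to \textup{\textsc{\#PerfectMatchings}} --- I would realize $\mathrm{EG}(G;x)$ as a Holant value over matchgate signatures: put the even-parity signature $[1,0,1,0,\dots]$ at each vertex of $G$, put the weighted binary equality $\left[\begin{smallmatrix}1&0\\0&x\end{smallmatrix}\right]$ at each edge of $G$, and wire these through the edge--vertex incidence structure, so that a consistent $\{0,1\}$-assignment selects an even subgraph $S$ and contributes $x^{|S|}$; the Holant sum is then exactly $\mathrm{EG}(G;x)$. The weighted binary equality is a two-external-node matchgate (one weight-$1/x$ edge joining the external nodes and one weight-$x$ edge joining two internal nodes), and the ternary parity gate is the matchgate obtained from $K_4$ by deleting two edges at a common vertex; chaining ternary parity gates into a parity-check tree gives the even-parity signature of any arity. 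Composing these nonnegatively weighted matchgates produces an explicit graph $H$ with $\mathrm{EG}(G;x)=\operatorname{PerfMatch}(H)$ and nonnegative rational edge weights, whereupon \cite[Proposition 5]{DBLP:journals/corr/abs-1301-2880} removes the weights.

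For the harder direction --- \textup{\textsc{\#PerfectMatchings}} reduces to evaluating $\mathrm{EG}(\cdot\,;x)$ --- I would first reduce to counting perfect matchings in a $3$-regular graph $H$ (AP-equivalent to the general problem via standard degree-reduction gadgets), discarding the polynomial-time-checkable case that $H$ has no perfect matching, so that $\operatorname{pm}(H)\ge1$. Since every degree of $H$ is odd, complementation is a bijection between even subgraphs of size $j$ and edge subsets of size $|E(H)|-j$ in which \emph{every} vertex has odd degree; a parity-and-handshaking count shows the smallest such all-odd subset has size $|V(H)|/2$, attained precisely by the perfect matchings. Therefore $\mathrm{EG}(H;y)=y^{\,|E(H)|-|V(H)|/2}\bigl(\operatorname{pm}(H)+\sum_{i\ge1}N_i\,y^{-i}\bigr)$ with $0\le N_i\le 2^{|E(H)|}$. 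To kill the tail I amplify the effective weight by subdividing each edge of $H$ into a path of length $L$: the degree-$2$ midpoints force an all-or-nothing choice along each path, so the subdivided graph $\widehat H$ satisfies $\mathrm{EG}(\widehat H;x)=\mathrm{EG}(H;x^{L})$, and taking $L=\Theta(|E(H)|)$ so that $x^{L}$ exceeds $2^{|E(H)|}$ by a large enough polynomial factor pushes the tail below any prescribed inverse polynomial. A $(1\pm\varepsilon)$-approximation to $Z_{\textup{\textsc{Ising}}}(\widehat H;\beta)$ --- equivalently to $\mathrm{EG}(\widehat H;x)=\mathrm{EG}(H;x^{L})$, as these differ by an explicit nonzero scalar --- divided by $x^{L(|E(H)|-|V(H)|/2)}$ then $(1\pm O(\varepsilon))$-approximates $\operatorname{pm}(H)$.

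I expect the main obstacle to be this last direction. The conceptual step is spotting that, after complementation, the \emph{smallest} even-subgraph complements are exactly the perfect matchings, so that a single evaluation of $\mathrm{EG}$ at a \emph{sufficiently large} weight exhibits $\operatorname{pm}(H)$ as the dominant leading term; the technical step is that, because $x$ is a \emph{fixed} constant, the only route to such a weight is edge subdivision, and one must verify this keeps the instance of polynomial size and the reduction approximation-preserving --- which works because $L$ is linear in $|E(H)|$ and $\operatorname{pm}(H)\ge1$. Lesser points to pin down are the AP-equivalence of \textup{\textsc{\#PerfectMatchings}} with its $3$-regular restriction, the matchgate constructions above, and the bookkeeping of the sign $(-1)^{|E|}$ and the explicit scalars in the expansion.
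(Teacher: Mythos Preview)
The paper does not supply a proof of this proposition --- it is quoted as \cite[Lemma~7]{GOLDBERG2008908} with no argument --- so there is no in-paper proof to compare against. Your reconstruction via the high-temperature (even-subgraph) expansion is the natural route, and the easy direction $\textup{\textsc{Ising}}(\beta)\le_\textup{AP}\textup{\textsc{\#PerfectMatchings}}$ is essentially correct; the one slip is that ``$K_4$ minus two edges at a common vertex'' does not realize the ternary even-parity signature $[1,0,1,0]$ (with externals $1,2,3$ and internal $4$, deleting $14,24$ leaves $f(000)=f(110)=1$ but $f(101)=f(011)=0$). A correct nonnegative-weight gadget does exist --- e.g.\ the full $K_4$ with one internal vertex, spokes of weight~$1$ and outer-triangle edges of weight~$1/3$ --- and your chaining argument then goes through.

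The hard direction has a real gap. You invoke an AP-reduction of \textsc{\#PerfectMatchings} to its $3$-regular restriction ``via standard degree-reduction gadgets'', but the path gadget only brings degrees \emph{down} to at most~$3$; to reach $3$-regularity (or even all odd degrees, which is what your complement argument actually needs) you must \emph{raise} the degree of the degree-$2$ vertices it creates, and there is a parity obstruction to doing this with a local gadget that preserves the perfect-matching count. Indeed, any gadget attached to a single vertex $v$ by an odd number of edges, with every new vertex of odd degree, must by handshaking have an odd number of new vertices, and therefore cannot be internally perfectly matched --- so it forces $v$ to be matched inside the gadget and destroys the correspondence with perfect matchings of the original graph. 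Since your leading-term extraction genuinely requires every degree of $H$ to be odd (otherwise the largest even subgraph of $H$ need not be the complement of a perfect matching), this step cannot simply be dropped. One repair is to avoid degree regularity altogether and route this direction through the ferromagnetic case: Goldberg and Jerrum establish $\textup{\textsc{\#PerfectMatchings}}\le_\textup{AP}\mathrm{EG}(\,\cdot\,;x')$ for each fixed $x'\in(0,1)$, and $2$-thickening every edge of $G$ yields $\mathrm{EG}(G^{(2)};x)=(1+x^2)^{|E(G)|}\,\mathrm{EG}\bigl(G;\tfrac{2x}{1+x^2}\bigr)$ with $\tfrac{2x}{1+x^2}\in(0,1)$ whenever $x>1$, which closes the chain without any degree hypothesis.
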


\begin{lemma}\label{lem:pm-hard_base}
The Ising problem $\textup{\textsc{Ising}}\left(\frac{w}{x}\right)$
is equivalent to the Holant problem $\textup{Holant}\left(\left[\begin{smallmatrix} w & 0 & 0 & x \\ 0 & 0 & 0 & 0 \\ 0 & 0 & 0 & 0 \\ x & 0 & 0 & w \end{smallmatrix}\right]\right)$.
In particular,
$\textup{\textsc{Ising}}\left(\frac{w}{x}\right) \equiv_\textup{AP} \textup{Holant}\left(\left[\begin{smallmatrix} w & 0 & 0 & x \\ 0 & 0 & 0 & 0 \\ 0 & 0 & 0 & 0 \\ x & 0 & 0 & w \end{smallmatrix}\right]\right)$.
\end{lemma}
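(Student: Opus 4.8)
The plan is to produce, from each instance of one problem, an instance of the other whose answer is the same up to an explicitly determined nonzero constant factor; since composing with such a scaling is trivially polynomial time and approximation-preserving, this simultaneously gives the exact equivalence and the stated $\equiv_\textup{AP}$. Write $g$ for the $4$-ary constraint with $M(g)=\left[\begin{smallmatrix} w & 0 & 0 & x \\ 0 & 0 & 0 & 0 \\ 0 & 0 & 0 & 0 \\ x & 0 & 0 & w \end{smallmatrix}\right]$; its only nonzero entries are $g_{0000}=g_{1111}=w$ and $g_{0011}=g_{1100}=x$. Hence $g$ vanishes unless its first two arguments agree and its last two arguments agree, in which case it equals $w$ when these two common values coincide and $x$ when they differ. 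So at a vertex of a $4$-regular graph $G$ the incident edges at ports $1,2$ are forced to carry one common value, those at ports $3,4$ another common value, and $g$ contributes $w$ or $x$ according to whether these agree.

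For the reduction from $\textup{Holant}(g)$ to $\textsc{Ising}(w/x)$: given a $4$-regular $G$, group the half-edges at each vertex into the \emph{port-pairs} $\{1,2\}$ and $\{3,4\}$, and form the auxiliary multigraph $G^\ast$ whose vertices are all port-pairs and whose edges are the edges of $G$, each edge of $G$ routed to the two port-pairs containing its two ends. Every port-pair meets exactly two half-edges, so $G^\ast$ is $2$-regular, i.e.\ a disjoint union of cycles $C_1,\dots,C_k$ (possibly degenerate, of length one or two). A term of the Holant sum is nonzero precisely when the assignment $\sigma\colon E(G)\to\{0,1\}$ is constant on each port-pair, equivalently on each $C_j$; and then the factor at a vertex $v$ is $w$ if the cycles through the two port-pairs of $v$ carry the same value and $x$ otherwise. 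Let $H$ be the multigraph (loops allowed) with vertices $C_1,\dots,C_k$ and one edge per vertex of $G$, the edge for $v$ joining the cycles of the two port-pairs of $v$. Summing over the $2^k$ assignments to the cycles and pulling the identity $w^{[\,\cdot\,]}x^{1-[\,\cdot\,]}=x\,(w/x)^{[\,\cdot\,]}$ out of each of the $|V(G)|$ vertices yields
\[
\textup{Holant}(G;g)\;=\;x^{|V(G)|}\;Z_{\textsc{Ising}}(H;\,w/x),
\]
loops of $H$ counting as monochromatic, in accordance with the definition of $\textsc{Ising}$; if one insists on simple graphs, loops and parallel edges of $H$ are removed by standard means at the cost of further explicit constants.

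For the reverse reduction, from $\textsc{Ising}(w/x)$ to $\textup{Holant}(g)$: given an Ising instance $H=(V_H,E_H)$, assume without loss of generality that $H$ has no isolated vertices (each contributes only a factor $2$), take one vertex $v_e$ of $G$ for every $e\in E_H$, orient $e$ arbitrarily, and declare one port-pair of $v_e$ to belong to the tail of $e$ and the other to its head. At each $t\in V_H$ the port-pairs belonging to $t$ are strung into a single cycle of $G^\ast$ by a cyclic chain of single $G$-edges that uses up all of their ports. Then $G$ is $4$-regular, $G^\ast$ has exactly one cycle per vertex of $H$, and the edge of $H$ recovered from $v_e$ is $e$ itself; so the displayed identity runs backward and $Z_{\textsc{Ising}}(H;w/x)=x^{-|E_H|}\,\textup{Holant}(G;g)$.

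Both constructions are clearly polynomial time and the correcting factors (a power of $x$, powers of $w/x$ for any loops, powers of $2$ for isolated vertices) are explicitly computable, so the two problems are interreducible; and since each reduction merely evaluates the other instance and multiplies by a known nonzero constant, the $\equiv_\textup{AP}$ assertion follows at once (note $x\neq0$, which is implicit in writing $\textsc{Ising}(w/x)$). The one substantive point is the combinatorial dictionary of the forward step — that the natural $2$-regular ``port-pair'' graph $G^\ast$ decomposes into cycles that play the role of Ising spins while the vertices of $G$ play the role of Ising edges — together with the check, in the reverse step, that the cyclic wiring really produces a $4$-regular $G$ with exactly the intended cycle structure; the degenerate cases (loops, parallel edges, low-degree or isolated vertices) are routine and affect only global constants.
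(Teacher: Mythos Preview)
Your proof is correct and follows essentially the same approach as the paper's. The paper first relabels the inputs to pass to the equivalent problem $\textup{Holant}\left(\left[\begin{smallmatrix} w & 0 & 0 & 0 \\ 0 & 0 & z & 0 \\ 0 & z & 0 & 0 \\ 0 & 0 & 0 & w \end{smallmatrix}\right]\right)$, so that the forced port-pairing is $\{1,3\},\{2,4\}$ (hence the name ``crossing circuits''), whereas you work directly with the pairing $\{1,2\},\{3,4\}$ dictated by the given matrix; otherwise your port-pair graph $G^\ast$, its cycles as Ising spins, and the vertex-to-edge correspondence $V(G)\leftrightarrow E(H)$ are exactly the paper's crossing-circuit construction, and your reverse reduction (wiring the port-pairs at each $t\in V_H$ into a single cycle) is the same as the paper's recipe for realizing an arbitrary graph as a crossing-circuit graph.
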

\begin{remark}
A non-homogenized form of the Ising model is
$\tilde{Z}_\textup{\textsc{Ising}}(G; x, w) := 
\sum\limits_{\sigma: V \rightarrow \{0, 1\}} w^{\textup{mono}(\sigma)}
x^{|E| - \textup{mono}(\sigma)}$. If $x \not =0$ then
$\tilde{Z}_\textup{\textsc{Ising}}(G; x, w) = x^{|E|} Z_\textup{\textsc{Ising}}(G; \frac{w}{x})$.
If $x = 0$ then in $\tilde{Z}_\textup{\textsc{Ising}}$
all vertices in each component must take the same assignment (all $0$ or all $1$).   
In this case both $\tilde{Z}_\textup{\textsc{Ising}}(G; x, w)$
and the Holant problem in Lemma~\ref{lem:pm-hard_base}
 are trivially solvable in polynomial time.
\end{remark}
\begin{proof}
For the problem $\textup{Holant}\left(\left[\begin{smallmatrix} w & 0 & 0 & x \\ 0 & y & z & 0 \\ 0 & z & y & 0 \\ x & 0 & 0 & w \end{smallmatrix}\right]\right)$, the roles of $x, y, z$ are interchangeable by relabeling the edges.
For example, if the constraint function $f(x_1,x_2,x_3,x_4)$ has
the constraint matrix
$\left[\begin{smallmatrix} w & 0 & 0 & x \\ 0 & y & z & 0 \\ 0 & z & y & 0 \\ x & 0 & 0 & w \end{smallmatrix}\right]$, 
then the constraint function $f(x_1,x_3,x_2,x_4)$ has the constraint matrix
$\left[\begin{smallmatrix} w & 0 & 0 & z \\ 0 & y & x & 0 \\ 0 & x & y & 0 \\ z & 0 & 0 & w \end{smallmatrix}\right]$.
It follows that 
\[
\textup{Holant}\left(\left[\begin{smallmatrix} w & 0 & 0 & x \\ 0 & 0 & 0 & 0 \\ 0 & 0 & 0 & 0 \\ x & 0 & 0 & w \end{smallmatrix}\right]\right)
~~~\mbox{and} ~~~
\textup{Holant}\left(\left[\begin{smallmatrix} w & 0 & 0 & 0 \\ 0 & 0 & x & 0 \\ 0 & x & 0 & 0 \\ 0 & 0 & 0 & w \end{smallmatrix}\right]\right)
\] are exactly the same problem. So to prove the lemma 
it suffices to prove 
the equivalence of
\[
\textup{\textsc{Ising}}\left(\frac{w}{z}\right) 
~~~\mbox{and} ~~~
 \textup{Holant}\left(\left[\begin{smallmatrix} w & 0 & 0 & 0 \\ 0 & 0 & z & 0 \\ 0 & z & 0 & 0 \\ 0 & 0 & 0 & w \end{smallmatrix}\right]\right).
\]

First we show that
\(
\textup{Holant}\left(\left[\begin{smallmatrix} w & 0 & 0 & 0 \\ 0 & 0 & z & 0 \\ 0 & z & 0 & 0 \\ 0 & 0 & 0 & w \end{smallmatrix}\right]\right)\)
can be expressed as
\( \textup{\textsc{Ising}}\left(\frac{w}{z}\right).
\)

Given a 4-regular graph $G = (V, E)$
as an instance of $\textup{Holant}\left(\left[\begin{smallmatrix} w & 0 & 0 & 0 \\ 0 & 0 & z & 0 \\ 0 & z & 0 & 0 \\ 0 & 0 & 0 & w \end{smallmatrix}\right]\right)$, we can partition $E$ into a set $\mathcal{C}$ of \emph{circuits} (in which vertices may repeat but edges cannot) in the following way:
at every vertex $v \in V$, denote the four edges incident to $v$ by $e_1, e_2, e_3, e_4$ in a cyclic order according to the local labeling of the signature
function;
we make $e_1$ and $e_3$ into adjacent edges in a single circuit,
and similarly we make $e_2$ and $e_4$ into  adjacent edges in a single circuit
(note that these may be the same circuit).
We say each circuit in $\mathcal{C}$ is a \emph{crossing circuit} of $G$.
%
For the graph $G$, we define its \emph{crossing-circuit graph} $\tilde{G} = (\mathcal{C}, \tilde{E})$, with possible multiloops and multiedges,  as follows:
its vertex set  $\mathcal{C}$ consists of the crossing circuits;
for every $v \in V$, if 
circuits $C_1$ and $C_2$  intersect at $v$, then there is an edge 
 $\tilde{e}_v \in \tilde{E}$ labeled by $v$.
Note that it is possible that  $C_1= C_2$, and for such a
self-intersectison point the edge $\tilde{e}_v$ is a loop.
Each $C \in \mathcal{C}$ may have multiple loops,
and for distinct circuits $C_1$ and $C_2$ there may be multiple edges
between them. The edge set $\tilde{E}$ of $\tilde{G}$
is in 1-1 correspondence with $V$ of $G$.

Observe that the problem $\textup{Holant}\left(\left[\begin{smallmatrix} w & 0 & 0 & 0 \\ 0 & 0 & z & 0 \\ 0 & z & 0 & 0 \\ 0 & 0 & 0 & w \end{smallmatrix}\right]\right)$ requires that every valid configuration $\sigma$ (that contributes a non-zero term) obeys the following rule at each vertex $v$:
\begin{itemize}
\item
Assuming $e_1, e_2, e_3, e_4$ are the four edges incident to $v$ in cyclic 
order,
then $\sigma(e_1) = \sigma(e_3)$ (denoted by $b_1$) and $\sigma(e_2) = \sigma(e_4)$ (denoted by $b_2$). That is to say, all edges in a crossing circuit must have the same assignment (either all $0$ or all $1$).
Therefore, the valid configurations $\sigma$ on the edges of $G$ 
are in 1-1 correspondence with $0,1$-assignments $\sigma'$ on the vertices of $\tilde{G}$.
\item
Under $\sigma$, the local weight on $v$ is $w$ if $b_1 = b_2$ and is $z$ otherwise.
Suppose crossing circuits  $C_1$ and $C_2$  intersect at $v$ (they could be identical). Then in $\tilde{G}$, $\sigma'$ has local weight $w$ on the edge $\tilde{e}_v$ if $\sigma'(C_1) = \sigma'(C_2)$ and has local weight $z$ otherwise.
\end{itemize}
This means
\[
\textup{Holant}\left(G; \left[\begin{smallmatrix} w & 0 & 0 & 0 \\ 0 & 0 & z & 0 \\ 0 & z & 0 & 0 \\ 0 & 0 & 0 & w \end{smallmatrix}\right]\right) = z^{|V(G)|} \cdot Z_\textup{\textsc{Ising}}\left(\tilde{G}; \frac{w}{z}\right).
\]

Next we show that
\( \textup{\textsc{Ising}}\left(\frac{w}{z}\right)
\)
can be expressed as
\(
\textup{Holant}\left(\left[\begin{smallmatrix} w & 0 & 0 & 0 \\ 0 & 0 & z & 0 \\ 0 & z & 0 & 0 \\ 0 & 0 & 0 & w \end{smallmatrix}\right]\right).\)
Note that every graph $G = (V, E)$ (without
isolated vertices) is the crossing-circuit graph of some 4-regular graph $\bar{G}$.
To define $\bar{G}$ from $G$, one only needs to
do the following: (1) transform each vertex $v \in V$ into a closed cycle $C_v$;
(2) for each loop at $v \in V$, make a self-intersection on $C_v$;
and (3) for each non-loop edge $\{u, v\} \in E$ ($u$ and $v$ are two distinct vertices), make $C_u$ and $C_v$ intersect in a ``crossing'' way at a vertex in $\bar{G}$ (by first creating a vertex $p$ on $C_u$ and another vertex $p'$ on $C_v$, then merging $p$ and $p'$ with local labeling $1,3$ on $C_u$ and $2, 4$ on $C_v$).
Then the above proof holds for the reverse direction.
\end{proof}

\begin{corollary}\label{cor:pm-hard_base}
Suppose $x \neq 0$ and $\frac{w}{x} < -1$. Then $\textup{\textsc{\#PerfectMatchings}} \equiv_\textup{AP} \textup{Holant}\left(\left[\begin{smallmatrix} w & 0 & 0 & x \\ 0 & 0 & 0 & 0 \\ 0 & 0 & 0 & 0 \\ x & 0 & 0 & w \end{smallmatrix}\right]\right).$
\end{corollary}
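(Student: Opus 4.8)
The plan is to obtain the statement by chaining the two equivalences already in hand and invoking transitivity of approximation-preserving interreducibility. First I would set $\beta = w/x$. Since $x \neq 0$ the ratio is well defined, and by hypothesis $\beta < -1$, so \propref{thm:ising} applies and gives $\textup{\textsc{\#PerfectMatchings}} \equiv_\textup{AP} \textup{\textsc{Ising}}(\beta) = \textup{\textsc{Ising}}(w/x)$.

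Next, \lemref{lem:pm-hard_base} states (again for $x \neq 0$, which holds by hypothesis) that $\textup{\textsc{Ising}}(w/x) \equiv_\textup{AP} \textup{Holant}\left(\left[\begin{smallmatrix} w & 0 & 0 & x \\ 0 & 0 & 0 & 0 \\ 0 & 0 & 0 & 0 \\ x & 0 & 0 & w \end{smallmatrix}\right]\right)$. Composing the AP-reductions in both directions — recall that $\le_\textup{AP}$ is closed under composition, so $\equiv_\textup{AP}$ is an equivalence relation — yields $\textup{\textsc{\#PerfectMatchings}} \equiv_\textup{AP} \textup{Holant}\left(\left[\begin{smallmatrix} w & 0 & 0 & x \\ 0 & 0 & 0 & 0 \\ 0 & 0 & 0 & 0 \\ x & 0 & 0 & w \end{smallmatrix}\right]\right)$, which is the assertion.

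There is essentially no obstacle here: the corollary is a one-step consequence of \propref{thm:ising} and \lemref{lem:pm-hard_base}. The only things to keep straight are that the hypothesis $x \neq 0$ is exactly what is needed to form the Ising interaction parameter $w/x$ (the degenerate case $x = 0$ being excluded, and in any event trivial by the remark following \lemref{lem:pm-hard_base}), and that the constraint matrix appearing in the corollary is literally the one in the statement of \lemref{lem:pm-hard_base}, so no edge relabeling or use of the $x,y,z$ interchangeability from that lemma's proof is required.
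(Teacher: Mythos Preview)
Your proposal is correct and matches the paper's intent: the corollary is stated without proof precisely because it is the immediate composition of \propref{thm:ising} and \lemref{lem:pm-hard_base} via transitivity of $\equiv_\textup{AP}$, exactly as you describe.
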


\begin{lemma}\label{lem:pm-hard_general}
Suppose $d > 0$ and at most one of $a, b, c$ is zero.
Then $\textup{Holant}\left(\left[\begin{smallmatrix} a+b+c+d & 0 & 0 & -a+b+c-d \\ 0 & 0 & 0 & 0 \\ 0 & 0 & 0 & 0 \\ -a+b+c-d & 0 & 0 & a+b+c+d \end{smallmatrix}\right]\right) \le_\textup{AP} \textup{\textsc{\#EightVertex}}(a, b, c, d)$.
\end{lemma}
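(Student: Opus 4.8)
The plan is to push everything through a single Holant reduction. By \lemref{lem:pm-hard_holant}, the problem \textup{\textsc{\#EightVertex($a,b,c,d$)}} coincides, up to the easily computable factor $2^{|V(G)|}$, with $\textup{Holant}(M)$, where $M$ is the $4\times 4$ matrix appearing there; writing $p=a+b+c+d$, $q=-a+b+c-d$, $r=a-b+c-d$, $s=a+b-c-d$, this $M$ is block diagonal $B_1\oplus B_2$ with corner block $B_1=\left(\begin{smallmatrix}p&q\\ q&p\end{smallmatrix}\right)$ supported on $\{0000,0011,1100,1111\}$ and central block $B_2=\left(\begin{smallmatrix}r&s\\ s&r\end{smallmatrix}\right)$ supported on $\{0101,1010,0110,1001\}$, and the matrix in the lemma is exactly $M_0:=B_1\oplus 0$. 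Since AP-reductions compose, it suffices to build an AP-reduction $\textup{Holant}(M_0)\le_\textup{AP}\textup{Holant}(M)$; informally, the task is to delete the central block $B_2$.

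First I would delete $B_2$ by polynomial interpolation with gadgets. For $k\ge 1$ let $N_k$ be the $4$-ary gadget formed by joining $k$ copies of $M$ in series (identifying edge $3$ of one copy with edge $1$ of the next and edge $4$ with edge $2$, the free binary equality doing the joining); its constraint is, up to a harmless relabeling of dangling edges, the $k$-th transfer-matrix power of $M$, hence again block diagonal with corner block $B_1^{\,k}$ and central block $B_2^{\,k}$. Given an instance $G=(V,E)$ of $\textup{Holant}(M_0)$, replace every vertex by $N_k$ to get a $\textup{Holant}(M)$-instance $G_k$. Diagonalizing $M$ (it is real symmetric, with eigenvalues among $p\pm q$ and $r\pm s$) and expanding the product over the vertices gives
\[
\textup{Holant}(G_k;M)=\sum_{\mu}C_{\mu}\,\mu^{k},
\]
a sum of at most $\binom{|V|+3}{3}$ exponentials indexed by the distinct products $\mu$ of $|V|$ eigenvalues of $M$. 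Writing $h_{\mathrm{corner}}(k)$ for the subsum over those $\mu$ all of whose factors are corner eigenvalues, a short computation shows $h_{\mathrm{corner}}(1)=\textup{Holant}(G;M_0)$. So I would query the $\textup{Holant}(\cdot\,;M)$-oracle on $G_1,\dots,G_N$ for $N=\mathrm{poly}(|V|)$ distinct $k$, solve the resulting Vandermonde system for the $C_{\mu}$, pick out those belonging to the corner products $\{(p+q)^{j}(p-q)^{|V|-j}\}$ (all computable from $a,b,c,d$), and evaluate $h_{\mathrm{corner}}$ at $1$; this is polynomial time and approximation preserving.

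The hypotheses enter precisely to license that last step: reading off $h_{\mathrm{corner}}$ requires the corner products $\{(p+q)^{j}(p-q)^{|V|-j}\}_{j}$ to be disjoint from every product that uses at least one central eigenvalue, i.e.\ no multiplicative coincidence among $p+q=2(b+c)$, $p-q=2(a+d)$, $r+s=2(a-d)$, $r-s=2(c-b)$. The conditions $d>0$ and ``at most one of $a,b,c$ is zero'' give $p+q\neq 0$ and $p-q>0$, force a corner eigenvalue to dominate the central ones in the generic case, and leave only finitely many boundary configurations to settle directly --- notably $a=0$ (where $|p-q|=|r+s|$), $q=0$, $a=d$, or $b=c$. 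These I would handle using the edge-relabeling symmetry of $M$ (which permutes the roles of $q,r,s$ while fixing $p$, so several interpolation families are available) together with the fact that $\textup{Holant}(M_0)$ collapses to a polynomial-time computable quantity in several of them, making the reduction vacuous there. I expect this spectral-separation and degenerate-case bookkeeping to be the only real obstacle; the gadget construction and the Vandermonde interpolation themselves are routine.
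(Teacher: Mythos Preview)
Your reduction is an \emph{exact} counting reduction, not an AP-reduction. Polynomial interpolation via a Vandermonde system is a standard tool for $\le_{\mathrm{T}}$ reductions between $\#\mathrm{P}$ problems, but it does not preserve approximation. Concretely, the oracle returns $(1\pm\varepsilon)$-approximations to $\textup{Holant}(G_k;M)$; to solve for the coefficients $C_\mu$ you must invert a Vandermonde matrix whose nodes $\mu$ range over products of $|V|$ eigenvalues, hence span an exponential ratio. Such a system has condition number exponential in $|V|$, so the relative errors in the $C_\mu$ are not polynomially controlled. Worse, the quantity you actually want, $h_{\mathrm{corner}}(1)=\sum_{\mu\ \mathrm{corner}}C_\mu\,\mu$, can be exponentially smaller than the oracle values (indeed the target Holant is equivalent to an antiferromagnetic Ising partition function with negative $\beta$, built out of signed contributions), so even perfect recovery of each $C_\mu$ to within a $(1\pm\varepsilon)$ factor would not yield a $(1\pm\varepsilon')$ factor on their signed combination. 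The sentence ``this is polynomial time and approximation preserving'' is exactly the gap; no amount of spectral-separation bookkeeping among $p\pm q,\,r\pm s$ repairs it.

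The paper proceeds by a completely different mechanism that \emph{is} AP-preserving: rather than interpolating to kill the central block, it manufactures a single polynomial-size gadget whose eight-vertex parameters $(\check a,\check b,\check c,\check d)$ are, after normalization, within $2^{-n^{C}}$ of one another. In the orientation view this gadget realizes a signature that is exponentially close to the arity-$4$ even-parity function, which (after the holographic transformation) is exactly $(=_4)$; since $M_0=M\cdot(=_4)$ as a $2$-chain, substituting this near-$(=_4)$ gadget for each occurrence yields an instance of $\textup{Holant}(M)$ whose value is within a $(1\pm 2^{-n^{C}})$ factor of the desired $\textup{Holant}(M_0)$ value. The gadget is built by iterating two explicit $2$-vertex combinations $G_1,G_2$ for $O(\log n)$ rounds, each round quadratically shrinking the normalized gap of the largest parameter, with a pigeonhole argument to track which parameter is shrunk in each round. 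That ``simulate the missing signature to exponential precision'' idea---not interpolation---is the content of the lemma.
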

\begin{proof}
According to \lemref{lem:pm-hard_holant}, we only need to prove
\begin{equation}\label{eqn:result_1}
\textup{Holant}\left(\left[\begin{smallmatrix} a+b+c+d & 0 & 0 & -a+b+c-d \\ 0 & 0 & 0 & 0 \\ 0 & 0 & 0 & 0 \\ -a+b+c-d & 0 & 0 & a+b+c+d \end{smallmatrix}\right]\right) \le_\textup{AP}
\textup{Holant}\left(\left[\begin{smallmatrix} a + b + c + d & 0 & 0 & - a + b + c - d \\ 0 & a - b + c - d & a +b - c - d & 0 \\ 0 & a +b - c - d & a - b + c - d & 0 \\ - a + b + c - d & 0 & 0 & a + b + c + d \end{smallmatrix}\right]\right).
\end{equation}

\figref{fig:2-chain} describes a simple gadget construction
in the Holant framework.
In \figref{fig:2-chain} if we place the constraint function $f_1$ with $M_{x_ix_j,x_lx_k}(f_1)$ and $f_2$ with $M_{x_px_q,x_sx_r}(f_2)$ at 
the two degree 4 vertices, then the constraint function $f_3$ of the 
 4-ary construction is $M(f_3) = M_{x_ix_j,x_sx_r}(f_3)
= M_{x_ix_j,x_lx_k}(f_1) \cdot M_{x_px_q,x_sx_r}(f_2)$.
Since
\[\left[\begin{smallmatrix} a+b+c+d & 0 & 0 & -a+b+c-d \\ 0 & 0 & 0 & 0 \\ 0 & 0 & 0 & 0 \\ -a+b+c-d & 0 & 0 & a+b+c+d \end{smallmatrix}\right] = \left[\begin{smallmatrix} a + b + c + d & 0 & 0 & - a + b + c - d \\ 0 & a - b + c - d & a +b - c - d & 0 \\ 0 & a +b - c - d & a - b + c - d & 0 \\ - a + b + c - d & 0 & 0 & a + b + c + d \end{smallmatrix}\right] \cdot \left[\begin{smallmatrix} 1 & 0 & 0 & 0 \\ 0 & 0 & 0 & 0 \\ 0 & 0 & 0 & 0 \\ 0 & 0 & 0 & 1 \end{smallmatrix}\right],\]
we know that the Holant problem below on the left can be expressed 
by the Holant problem on the  right, and therefore
\[
\textup{Holant}\left(\left[\begin{smallmatrix} a+b+c+d & 0 & 0 & -a+b+c-d \\ 0 & 0 & 0 & 0 \\ 0 & 0 & 0 & 0 \\ -a+b+c-d & 0 & 0 & a+b+c+d \end{smallmatrix}\right]\right) \le_\textup{AP}
\textup{Holant}\left(\left[\begin{smallmatrix} a + b + c + d & 0 & 0 & - a + b + c - d \\ 0 & a - b + c - d & a +b - c - d & 0 \\ 0 & a +b - c - d & a - b + c - d & 0 \\ - a + b + c - d & 0 & 0 & a + b + c + d \end{smallmatrix}\right], \left[\begin{smallmatrix} 1 & 0 & 0 & 0 \\ 0 & 0 & 0 & 0 \\ 0 & 0 & 0 & 0 \\ 0 & 0 & 0 & 1 \end{smallmatrix}\right]\right).
\]
Notice that $\left[\begin{smallmatrix} 1 & 0 & 0 & 0 \\ 0 & 0 & 0 & 0 \\ 0 & 0 & 0 & 0 \\ 0 & 0 & 0 & 1 \end{smallmatrix}\right]$ is the signature matrix of the
arity 4 equality function $(=_4)$.
Therefore, (\ref{eqn:result_1}) is true if we can show
\[
\textup{Holant}\left(\left[\begin{smallmatrix} a + b + c + d & 0 & 0 & - a + b + c - d \\ 0 & a - b + c - d & a +b - c - d & 0 \\ 0 & a +b - c - d & a - b + c - d & 0 \\ - a + b + c - d & 0 & 0 & a + b + c + d \end{smallmatrix}\right], \left[\begin{smallmatrix} 1 & 0 & 0 & 0 \\ 0 & 0 & 0 & 0 \\ 0 & 0 & 0 & 0 \\ 0 & 0 & 0 & 1 \end{smallmatrix}\right]\right)
\le_\textup{AP}
\textup{Holant}\left(\left[\begin{smallmatrix} a + b + c + d & 0 & 0 & - a + b + c - d \\ 0 & a - b + c - d & a +b - c - d & 0 \\ 0 & a +b - c - d & a - b + c - d & 0 \\ - a + b + c - d & 0 & 0 & a + b + c + d \end{smallmatrix}\right]\right),
\]
which is equivalent to the following in the orientation view according to the proof of \lemref{lem:pm-hard_holant}
\begin{equation}\label{eqn:result_2}
\textup{Holant}\left(\neq_2 | \left[\begin{smallmatrix} d & 0 & 0 & a \\ 0 & b & c & 0 \\ 0 & c & b & 0 \\ a & 0 & 0 & d \end{smallmatrix}\right], \left[\begin{smallmatrix} 1 & 0 & 0 & 1 \\ 0 & 1 & 1 & 0 \\ 0 & 1 & 1 & 0 \\ 1 & 0 & 0 & 1 \end{smallmatrix}\right]\right)
\le_\textup{AP}
\textup{Holant}\left(\neq_2 | \left[\begin{smallmatrix} d & 0 & 0 & a \\ 0 & b & c & 0 \\ 0 & c & b & 0 \\ a & 0 & 0 & d \end{smallmatrix}\right]\right),
\end{equation}
by the holographic transformation
\[4 (Z^{-1})^{\otimes 4} ( (=_4)) =
\begin{bmatrix} 1 & - i \\ 1 & i \end{bmatrix}^{\otimes 4}
\left[ \left[\begin{matrix} 1 \\ 0 \end{matrix}\right]^{\otimes 4}
+
\left[\begin{matrix} 0 \\ 1 \end{matrix}\right]^{\otimes 4}
\right]
=
 \left[\begin{matrix} 1 \\ 1 \end{matrix}\right]^{\otimes 4}
+
\left[\begin{matrix} -i \\ i \end{matrix}\right]^{\otimes 4}
=
2 [1,0,1,0,1],\]
a constant multiple of the even parity function,
which has the signature matrix
$\left[\begin{smallmatrix} 1 & 0 & 0 & 1 \\ 0 & 1 & 1 & 0 \\ 0 & 1 & 1 & 0 \\ 1 & 0 & 0 & 1 \end{smallmatrix}\right]$.

\begin{figure}[h!]
\centering
\includegraphics[width=0.4\linewidth]{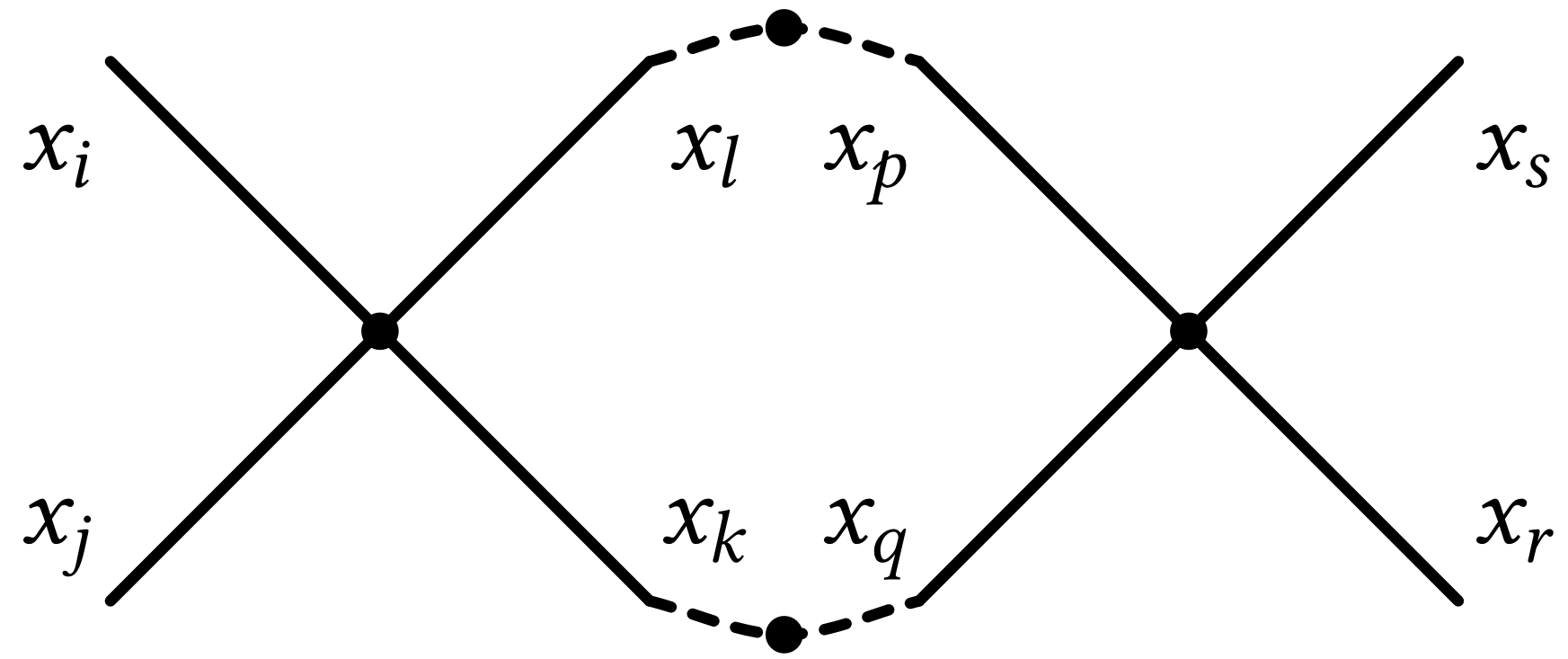}
\caption{}
\label{fig:2-chain}
\end{figure}

Next we show how to get (\ref{eqn:result_2}).
Given  the constraint function $f$ with
matrix
$\left[\begin{smallmatrix} d & 0 & 0 & a \\ 0 & b & c & 0 \\ 0 & c & b & 0 \\ a & 0 & 0 & d \end{smallmatrix}\right]$ 
in \textup{\textsc{\#EightVertex}}($a, b, c, d$),
we construct a 4-ary signature $\check{f}$ with constraint matrix 
$\left[\begin{smallmatrix} \check{d} & 0 & 0 & \check{a} \\ 0 & \check{b} & \check{c} & 0 \\ 0 & \check{c} & \check{b} & 0 \\ \check{a} & 0 & 0 & \check{d} \end{smallmatrix}\right]$
using a polynomial number of vertices and edges
such that $\check{a}$, $\check{b}$, $\check{c}$, and $\check{d}$ are all
exponentially close to $1$ after
normalization, i.e., to be $2^{-n^C}$ close to 1, for any $C>0$,
  with a construction of $n^{O(1)}$ size in polynomial time.

We assume we start with the following condition:
\begin{equation}\label{eqn:condition}
0 < d \le a \le b \le c. 
\end{equation}
If this is not the case, we can obtain a 4-ary construction that realizes this condition using constantly many vertices.
With some preliminary construction we can further assume
$1 \le d \le a \le b \le c \le \frac{3}{2}d$ initially.  (See \appref{app:star_condition} for details.)
Note that starting with the constraint function $f$ with  matrix $M(f) = \left[\begin{smallmatrix} d & 0 & 0 & a \\ 0 & b & c & 0 \\ 0 & c & b & 0 \\ a & 0 & 0 & d \end{smallmatrix}\right]$, we can 
arbitrarily permute $a, b, c$ by relabeling the edges,
and so we get constraint functions $f_1$ with
$M(f_1) = \left[\begin{smallmatrix} d & 0 & 0 & b \\ 0 & a & c & 0 \\ 0 & c & a & 0 \\ b & 0 & 0 & d \end{smallmatrix}\right]$ and $f_2$ with
$M(f_2) = \left[\begin{smallmatrix} d & 0 & 0 & c \\ 0 & a & b & 0 \\ 0 & b & a & 0 \\ c & 0 & 0 & d \end{smallmatrix}\right]$.
There are two constructions $G_1$ and $G_2$ which we use as basic steps; both
constructions start with
a constraint function $f$ with parameters satisfying (\ref{eqn:condition}).
\begin{enumerate}
\item
$\boldsymbol{G_1}$: connect two vertices with constraint functions $f_1$ and $f_2$ respectively as in \figref{fig:2-chain}.
Since we are in the orientation view, we place the constraint function $(\neq_2)$ on the two degree 2 vertices connecting the two degree 4 vertices.
Then the constraint function $g_1$ of the construction $G_1$ is 
obtained by matrix multiplication $M(g_1) = M_{x_ix_j,x_sx_r}(g_1)
= M(f_1) \cdot N \cdot M(f_2)$,
where $N =  \left[\begin{smallmatrix} & & & 1\\ & & 1 & \\ & 1 & &\\ 1 & & & \end{smallmatrix}\right]$.
Thus
\[
M(g_1) = 
\left[\begin{smallmatrix} (b+c)d & 0 & 0 & bc+d^2 \\ 0 & a(b+c) & a^2+bc & 0 \\ 0 & a^2+bc & a(b+c) & 0 \\ bc+d^2 & 0 & 0 & (b+c)d \end{smallmatrix}\right].
\]
The  constraint function $g_1$ 
has four new parameters, denoted by 
\[(a_1, ~b_1, ~c_1, ~d_1) = (a(b+c), ~bc+d^2, ~a^2+bc, ~(b+c)d).\]
We make the following observations; all of them can be easily verified
using  (\ref{eqn:condition}): 
\begin{itemize}
\item
$d_1$ is the weight on sink and source and $0 < d_1 \le a_1, b_1, c_1$. 
\item
$c_1 = \max(a_1, b_1, c_1, d_1)$.
\item
$\frac{a_1}{d_1} = \frac{a}{d}, \frac{b_1}{d_1} \le \frac{b}{d}, \frac{c_1}{d_1} \le \frac{c}{d}$.
\item
$c_1d_1 \le a_1b_1$ because $c_1d_1 - a_1b_1 = -(b+c)(a-d)(bc - ad) \le 0$.
\end{itemize}
\item
$\boldsymbol{G_2}$: connect two vertices with constraint functions $f_2$ as in \figref{fig:2-chain}. Denote the constraint function of $G_2$ by $g_2$. We have
\[
M(g_2) = M(f_2) \cdot N \cdot M(f_2) =
\left[\begin{smallmatrix} 2cd & 0 & 0 & c^2+d^2 \\ 0 & 2ab & a^2+b^2 & 0 \\ 0 & a^2+b^2 & 2ab & 0 \\ c^2+d^2 & 0 & 0 & 2cd \end{smallmatrix}\right].
\]
The  constraint function $g_2$
has four new parameters, denoted by
\[(a_2, ~b_2, ~c_2, ~d_2) = (2ab, ~a^2 + b^2, ~c^2 + d^2, ~2cd).\]
The following observations can also be easily verified
using  (\ref{eqn:condition}):
\begin{itemize}
\item
$d_2$ is the weight on sink and source and if $cd \le ab$, then $0 < d_2 \le a_2, b_2, c_2$. 
\item
$\frac{a_2}{d_2} \le \frac{a}{d}, \frac{b_2}{d_2} \le \frac{b}{d}, \frac{c_2}{d_2} \le \frac{c}{d}$.
\item
$\frac{c_2 - d_2}{d_2} = \frac{(c-d)^2}{2cd} \le \frac{1}{2} \left(\frac{c-d}{d}\right)^2 \le \left(\frac{c-d}{d}\right)^2$.
\end{itemize}
\end{enumerate}

Based on the two basic constructions above, we construct
 the constraint function $\check{f}$ in logarithmically many rounds
recursively, each of the $O(\log n)$ rounds uses the constraint function
constructed in the previous round.
We now describe a single round in this construction, which consists
of two steps. In step 1 we use a signature with
some parameter setting $(a, b, c, d)$ satisfying (\ref{eqn:condition})
and apply  $G_1$ to two copies of the signature.
If the resulting parameter $b_1 < a_1$ we switch the roles of
$a_1$ and $b_1$, and obtain $(a'_1, b'_1, c'_1, d'_1)
= (b_1, a_1, c_1, d_1)$, again satisfying
(\ref{eqn:condition}), as well as $c_1d_1 \le a_1 b_1$.
In step 2, we apply $G_2$ to two copies of the signature constructed
in step 1 (with the switching of the roles of $a_1$ and $b_1$ if
it is needed).
Denote the parameters of the  resulting signature by $(a^*, b^*, c^*, d^*)$.
Altogether each round uses four copies of the signature from the
previous round, starting with the initial given signature.
Therefore in polynomial time we can afford to carry out $C \log n$ rounds
for any constant $C$.
Note that, 
if we consider the normalized quantities $(\frac{a}{d}, \frac{b}{d},
 \frac{c}{d}, \frac{d}{d})$,
then the respective quantities in each step $G_1$ and $G_2$ do
 not increase their distances to 1, i.e.,
 \[ 0 \le \frac{a^*}{d^*} - 1 \le \frac{a}{d} -1,~~~
0 \le \frac{b^*}{d^*} - 1 \le \frac{b}{d} -1,~~~
0 \le \frac{c^*}{d^*} - 1 \le \frac{c}{d} -1.\]
This is true even if the $G_2$ construction in step 2
is applied in the case when 
  the roles of $a_1$ and $b_1$ are switched for the signature from step 1,
when that switch is required ($b_1 < a_1$) as described.
More importantly,
based on the properties of $G_1$ and $G_2$, we know that
the (normalized) gap between $d$ and the previous largest entry $c$ shrinks quadratically fast, as measured by the new $c^*$ normalized with $d^*$.
More precisely,
\[ 0 \le \frac{c^*}{d^*} - 1 \le 
\left(\frac{c}{d} - 1 \right)^2.\]

Note that $c^*$ 
 may no longer be the largest among $a^*, b^*, c^*$;
however we will permute them to get $\widetilde{a}, \widetilde{b}, 
\widetilde{c}$ so that (\ref{eqn:condition}) is still satisfied before
proceeding to the next round.
This completes the description of our construction in one round
which obtains $(\widetilde{a}, \widetilde{b},
\widetilde{c}, \widetilde{d})$ from $(a,b,c,d)$.

We will construct the final signature $\check{f}$ 
by $O(\log n)$ rounds of this construction.
Also we will follow each value $a, b, c$ individually as they get transformed
through each round.
To state it formally, starting with the normalized triple
 $(\frac{a}{d}, \frac{b}{d},
 \frac{c}{d})$, we define a successor triple
$(\frac{a^*}{d^*}, \frac{b^*}{d^*},
 \frac{c^*}{d^*})$, so that each entry has the respective successor
(e.g., the entry $\frac{a}{d}$ has successor $\frac{a^*}{d^*}$).
This is well-defined because $(a_1, b_1, c_1, d_1)$ and $(a_2, b_2, c_2, d_2)$
are homogeneous functions of $(a,b,c,d)$.
Note that even though from one round to the next,
we may have to rename $a^*, b^*, c^*$ so that the permutated triple
$\widetilde{a}, \widetilde{b},
\widetilde{c}$ satisfies  (\ref{eqn:condition}),
the successor sequence as the rounds progress stays with an individual
value. E.g., starting from $(a,b,c,d)$,
if after one round $a^*  = \max(a^*, b^*, c^*, d^*) = \widetilde{c}$,
then the successor of $\frac{a}{d}$  after two rounds is 
$\frac{(\widetilde{c})^*}{(d^*)^*}$.
Now define $(\alpha_k, \beta_k, \gamma_k)$
to be the (ordered)  triple $(\frac{a}{d}, \frac{b}{d},
 \frac{c}{d})$,  or its successor triple,
at the beginning of the $k$-th round, for $k \ge 1$.


Let $\check{f}$ be the  4-ary signature constructed  after $3(k+1)$ rounds.
By the \emph{Pigeonhole Principle}, after $3(k+1)$ rounds, 
at least  one of $a,b, c$ has the property that
in at least $k+1$ many rounds  the corresponding  $\frac{a}{d}, \frac{b}{d},
 \frac{c}{d}$ or its successors are the maximum (normalized) value
in that round,
 and thus its next successor gets shrunken quadratically in that round.
Suppose this is $a$; the same proof works if it is $b$ or $c$. 
Let $\alpha_i$ be the maximum (normalized) value 
at the beginning of round $i$ in  $k+1$  rounds,
where $i \in \{i_0, \ldots, i_{k}\}$, and 
$1 \le i_0 < i_1 < \ldots < i_{k} \le 3(k+1)$.
Since initially we have $1 \le d \le a \le b \le c \le \frac{3}{2}d$,
\[0 \le \alpha_{i_1} - 1 \le \alpha_{i_0 +1} -1 \le 
\left( \alpha_{i_0} -1 \right)^2  
\le \frac{1}{2^2}.\]
Then
\[0 \le \alpha_{i_2} - 1 \le  \alpha_{i_1 +1} -1 
\le \left( \alpha_{i_1} -1 \right)^2 \le \frac{1}{2^{2^2}}.\]
By induction $0 \le \alpha_{i_k} - 1 \le \frac{1}{2^{2^k}}$.
At the end of  $3(k+1)$ rounds, 
if  $\check{f}$ has parameters
 $(\check{a},\check{b},\check{c},\check{d})$, then
\[0  \le  \frac{\max(\check{a},\check{b},\check{c})}{\check{d}}  -1
 \le  \alpha_{i_{k}} - 1 \le \frac{1}{2^{2^k}}.\]


Therefore, after logarithmically many rounds, using polynomially many vertices, we can get a 4-ary construction with parameters $\check{a}$, $\check{b}$, $\check{c}$, and $\check{d}$ that are exponentially close to 1
after normalizing by $\check{d}$.
Thus (\ref{eqn:result_2}) is proved.
\end{proof}

\section{\#\textsc{PerfectMatchings}-easiness}\label{sec:easiness}

In this section, we address two problems:
\begin{enumerate}
\item
What are the constraint functions that can be realized by 4-ary matchgates (\defnref{defn:matchgate})?

Although the set of constraint functions that can be realized by planar matchgates with complex edge weights have been completely characterized~\cite{cai_chen_2017},
the set of constraint functions that can be realized by general (not necessarily planar) matchgates with nonnegative real edge weights is not fully understood, even for matchgates of arity 4.
This type of matchgates plays a crucial role in the
study of the approximate complexity of counting problems, as we will
see in this paper.

In \thmref{thm:matchgate}, we give a complete characterization
of constraint functions of arity 4 that can be realized by 
matchgates  with nonnegative real edges.  Our method is primarily
geometric.
\item
\thmref{thm:pm-hard} shows that  
for positive parameters
$(a, b, c, d) \not\in \texttt{$d$-SUM}$
the problem \textup{\textsc{\#EightVertex($a, b, c, d$)}} is at least
as hard as  counting perfect matchings approximately. Here we ask the
reverse question:
For what parameter settings $(a, b, c, d)$ does \textup{\textsc{\#EightVertex($a, b, c, d$)}} $\le_\textup{AP}$ \textup{\textsc{\#PerfectMatchings}}?

We know that
\[Z_{\textup{\textsc{EightVertex}}}(G; a, b, c, d) = \textup{Holant}\left(G'; \neq_2 |\ f\right),\]
where $f$ is the 4-ary constraint function with $M(f) = \left[\begin{smallmatrix} d & 0 & 0 & a \\ 0 & b & c & 0 \\ 0 & c & b & 0 \\ a & 0 & 0 & d \end{smallmatrix}\right]$.
Considering the fact that $(\neq_2)$ can be easily realized by a matchgate (a vertex with two dangling edges),
\thmref{thm:pm-easy} is a direct consequence of \lemref{lem:pm-easy} which says that any constraint function in \texttt{SQ-SUM} 
is realizable by some \emph{4-ary matchgate} of constant size
 (with nonnegative edge weights, but not necessarily planar) 
(see \defnref{defn:matchgate}).
Our theorem works for the eight-vertex model with parameter settings $\mathcal{S}^\textup{E}_{\le^2}$ (defined below) not necessarily satisfying the arrow reversal symmetry.
 
Moreover, \lemref{lem:pm-easy_tight} indicates that our result is tight in the sense that $\mathcal{S}^\textup{E}_{\le^2}$ captures precisely
the set of all constraint functions that can be realized by 4-ary matchgates 
(with even support, i.e., nonzero only on inputs of even Hamming weight).
A similar statement holds for $\mathcal{S}^\textup{O}_{\le^2}$.
the corresponding set with odd support.

\end{enumerate}

\begin{definition}\label{defn:matchgate}
We use the term \emph{a $k$-ary matchgate} to denote a graph $\Gamma$ having $k$ ``dangling'' edges, labelled $i_1, \dots, i_k$.
Each dangling edge has weight $1$ and each non-dangling edge $e$ is equipped with a 
nonnegative weight $w_e$.
A configuration is a $0, 1$-assignment to the edges. A configuration is a perfect matching if every vertex has exactly one incident edge assigned $1$. The matchgate implements the constraint function $f$, where $f(b_1, \dots, b_k)$ for
$(b_1, \dots, b_k) \in \{0, 1\}^k$ is the sum, over perfect matchings, of the product of the weight of edges with assignment $1$, where 
the dangling edge $i_j$ is assigned $b_j$, and the empty product has weight $1$.
\end{definition}

\begin{remark}
Contrary to \defnref{defn:matchgate} which does not require planarity, planar matchgates with complex edge weights has been completely characterized~\cite{doi:10.1137/S0097539700377025, cai_chen_2017}.
As computing the weighted sum of perfect matchings is in polynomial time over planar graphs by the \emph{FKT algorithm}~\cite{doi:10.1080/14786436108243366, KASTELEYN19611209, kasteleyn_book}, problems that can be locally expressed by planar matchgates are tractable over planar graphs.
\end{remark}

\begin{notation}
$\mathcal{S}^\textup{E}_{\le^2} = \{f \ |\ M(f) = \left[\begin{smallmatrix} d_1 & 0 & 0 & a_1 \\ 0 & b_1 & c_1 & 0 \\ 0 & c_2 & b_2 & 0 \\ a_2 & 0 & 0 & d_2 \end{smallmatrix}\right] \text{ satisfying }
\left\{\begin{smallmatrix}
a_1a_2 \ \le\ b_1b_2 + c_1c_2 + d_1d_2 \\
b_1b_2 \ \le\ a_1a_2 + c_1c_2 + d_1d_2 \\
c_1c_2 \ \le\ a_1a_2 + b_1b_2 + d_1d_2 \\
d_1d_2 \ \le\ a_1a_2 + b_1b_2 + c_1c_2
\end{smallmatrix}\right.,
a_1, a_2, b_1, b_2, c_1, c_2, d_1, d_2 \ge 0.
\}$,
$\mathcal{S}^\textup{O}_{\le^2} = \{f \ |\ M(f) = \left[\begin{smallmatrix} 0 & d_1 & a_1 & 0 \\ b_1 & 0 & 0 & c_1 \\ c_2 & 0 & 0 & b_2 \\ 0 & a_2 & d_2 & 0 \end{smallmatrix}\right] \text{ satisfying }
\left\{\begin{smallmatrix}
a_1a_2 \ \le\ b_1b_2 + c_1c_2 + d_1d_2 \\
b_1b_2 \ \le\ a_1a_2 + c_1c_2 + d_1d_2 \\
c_1c_2 \ \le\ a_1a_2 + b_1b_2 + d_1d_2 \\
d_1d_2 \ \le\ a_1a_2 + b_1b_2 + c_1c_2
\end{smallmatrix}\right.,
a_1, a_2, b_1, b_2, c_1, c_2, d_1, d_2 \ge 0.
\}$.
\end{notation}

\begin{theorem}\label{thm:matchgate}
Denote by $\mathcal{M}$ the set of constraint functions that can be realized by 4-ary matchgates.
Then $\mathcal{M} = \mathcal{S}^\textup{E}_{\le^2} \bigcup \mathcal{S}^\textup{O}_{\le^2}$.
\end{theorem}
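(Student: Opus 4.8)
The plan is to prove the two inclusions $\mathcal{M} \subseteq \mathcal{S}^\textup{E}_{\le^2} \cup \mathcal{S}^\textup{O}_{\le^2}$ and $\mathcal{S}^\textup{E}_{\le^2} \cup \mathcal{S}^\textup{O}_{\le^2} \subseteq \mathcal{M}$ separately. For the forward inclusion, I would invoke the algebraic structure of matchgate signatures. Every matchgate signature has either even or odd support (parity of the Hamming weight is preserved by perfect matchings once the parity of $|V(\Gamma)|$ is fixed), which explains the split into $\mathcal{S}^\textup{E}$ and $\mathcal{S}^\textup{O}$. Restricting to the even case, the classical \emph{matchgate identities} (Grassmann--Plücker-type relations on the signature entries) cut down the space of realizable signatures; for arity $4$ with even support there is essentially one such identity, which after the holographic change of basis takes the form of the quadratic relation among $a_1a_2, b_1b_2, c_1c_2, d_1d_2$ defining $\mathcal{S}^\textup{E}_{\le^2}$ together with the sign/nonnegativity constraints coming from the requirement that all edge weights be nonnegative reals. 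So the forward direction should reduce to: (i) recall the arity-$4$ matchgate identity; (ii) translate it into the stated coordinates; (iii) check the nonnegativity side conditions force exactly the four triangle-type inequalities. The odd case is symmetric by attaching a single pendant edge to flip parity.

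The substantive direction is $\mathcal{S}^\textup{E}_{\le^2} \cup \mathcal{S}^\textup{O}_{\le^2} \subseteq \mathcal{M}$: given parameters satisfying the four inequalities and nonnegativity, actually build a matchgate of constant size with nonnegative edge weights realizing $f$. This is where \lemref{lem:pm-easy} and its geometric core \lemref{lem:pm-easy_geometry} enter. I would fix a small ``universal'' gadget skeleton (a constant-size graph with a handful of tunable edge weights and the four dangling edges) whose signature, as a function of the weights, sweeps out a region of parameter space; the point of the geometric lemma is that this region is exactly the set cut out by the four inequalities. Concretely, after normalizing, the condition ``$a_1a_2,b_1b_2,c_1c_2,d_1d_2$ satisfy the quadrilateral inequalities'' is equivalent to ``there exist four nonnegative reals forming a closed quadrilateral in space / four vectors summing appropriately,'' and the gadget construction realizes precisely such configurations. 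So the easiness direction is: (i) reduce by symmetry and normalization to a canonical sub-case; (ii) present the gadget; (iii) compute its signature and match it, invoking \lemref{lem:pm-easy_geometry} to certify that every target in $\mathcal{S}^\textup{E}_{\le^2}$ is hit.

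The main obstacle I anticipate is the easiness direction --- specifically, exhibiting a single constant-size gadget family whose signature locus is neither too small (missing boundary cases of the inequalities, e.g. when one of the products is zero or an inequality is tight) nor too large (leaking outside $\mathcal{S}^\textup{E}_{\le^2}$, which would contradict the forward inclusion). Handling the degenerate strata where some $a_i$ or the corresponding product vanishes, and the tight/equality faces of the polytope-like region, will likely require either separate small gadgets or a limiting argument, and getting a uniform construction is the delicate part. The geometric reformulation in \lemref{lem:pm-easy_geometry} is designed precisely to package this difficulty: once one knows that membership in the region is the same as existence of a certain spatial configuration of segments, the gadget can be read off from that configuration, so the bulk of the work is really in establishing and then applying that lemma rather than in the matchgate bookkeeping itself.
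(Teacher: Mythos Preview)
Your backward inclusion $\mathcal{S}^\textup{E}_{\le^2} \cup \mathcal{S}^\textup{O}_{\le^2} \subseteq \mathcal{M}$ is essentially the paper's approach: explicit constant-size gadgets (weighted $K_4$ on the equality faces, a weighted $K_6$ in the interior) whose signature locus is shown to cover the region via the geometric \lemref{lem:pm-easy_geometry}, with separate handling of degenerate strata. So that half of your plan is fine.

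The forward inclusion has a genuine gap. The classical matchgate identities of Grassmann--Pl\"ucker type are \emph{equalities} that hold for \emph{planar} matchgates (where the signature is a Pfaffian minor), and for arity~$4$ even support the single identity reads, up to sign conventions, $f(0000)f(1111) - f(0011)f(1100) + f(0101)f(1010) - f(0110)f(1001) = 0$. But the theorem concerns general, not-necessarily-planar matchgates with nonnegative weights, for which this equality simply fails; and the defining conditions of $\mathcal{S}^\textup{E}_{\le^2}$ are four genuine \emph{inequalities}, not one equality together with sign constraints. There is no holographic change of basis that turns the planar Pfaffian identity into those four inequalities. The paper instead proves each inequality, e.g.\ $a_1a_2 \le b_1b_2 + c_1c_2 + d_1d_2$, by a direct combinatorial argument (\lemref{lem:pm-easy_tight}): given a pair of perfect matchings $(m_1,m_2) \in M_{\{i_1,i_2\}} \times M_{\{i_3,i_4\}}$, look at the symmetric difference $m_1 \oplus m_2$, follow the alternating path from $i_1$, and flip along it to produce a weight-preserving injection into $(M_{\{i_2,i_3\}} \times M_{\{i_1,i_4\}}) \cup (M_{\{i_2,i_4\}} \times M_{\{i_1,i_3\}}) \cup (M_\emptyset \times M_I)$. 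This alternating-path injection is the missing idea; it is what replaces the algebraic identity in the non-planar nonnegative setting, and without it your step~(iii) (``check the nonnegativity side conditions force exactly the four triangle-type inequalities'') has no content.
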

\begin{remark}
Note that any constraint function in $\mathcal{M}$ must satisfy either \emph{even parity} (nonzero only on inputs of even Hamming weight) or \emph{odd parity} (nonzero only on inputs of odd Hamming weight).
\thmref{thm:matchgate} for the even parity part ($\mathcal{S}^\textup{E}_{\le^2}$) is a combination of \lemref{lem:pm-easy} and \lemref{lem:pm-easy_tight}.
The odd parity part can be proved similarly.
\end{remark}


\begin{lemma}\label{lem:pm-easy}
Suppose $f \in \mathcal{S}^\textup{E}_{\le^2}$.
Then there is a 4-ary matchgate of constant size whose constraint function is $f$.
\end{lemma}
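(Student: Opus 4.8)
\emph{Proof proposal.} The plan is to strip the problem down to a purely geometric statement about a cone in $\mathbb{R}^4$ (equivalently, a region in $3$-dimensional projective space) and then to realize that cone by an explicit family of constant-size matchgates.

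First I would note that the only data relevant to membership in $\mathcal{S}^\textup{E}_{\le^2}$ are the four products of complementary pairs of entries, $P_1=f_{0011}f_{1100}=a_1a_2$, $P_2=f_{0110}f_{1001}=b_1b_2$, $P_3=f_{0101}f_{1010}=c_1c_2$, $P_4=f_{0000}f_{1111}=d_1d_2$, and that the individual entries may be redistributed freely. Indeed, attaching to the $i$-th dangling edge a diagonal ``dongle'' matchgate with signature $\operatorname{diag}(s_i,t_i)$, $s_i,t_i\ge 0$ (a legal nonnegative matchgate), multiplies each signature value by $s_i$ on inputs with $x_i=0$ and by $t_i$ on inputs with $x_i=1$; composing four such dongles one checks, via the invertibility of a $4\times 4$ Hadamard-type matrix acting on logarithms, that the four ratios $f_{0011}/f_{1100}$, $f_{0110}/f_{1001}$, $f_{0101}/f_{1010}$, $f_{0000}/f_{1111}$ can be moved to any prescribed positive values while every $P_i$ is preserved. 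So it suffices to realize, for each $(P_1,P_2,P_3,P_4)$ with $P_i\ge 0$ and $P_i\le\sum_{j\ne i}P_j$ for all $i$, \emph{some} constant-size even-support matchgate whose four products equal $(P_1,P_2,P_3,P_4)$; degenerate cases in which some $P_i$ or some target entry is $0$ can be handled by small ad hoc gadgets or by taking limits of nondegenerate realizations.

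Next I would exhibit an explicit parametrized matchgate: four external vertices $v_1,\dots,v_4$, each carrying one dangling edge, together with a bounded number of internal ``connector'' vertices and nonnegative edge weights, chosen so that every signature entry is a polynomial with nonnegative coefficients in the weights. The design point---and the reason the plain $K_4$ gadget on $v_1,\dots,v_4$ is not enough---is that one must let internal connector vertices be \emph{shared} between the $\{v_1,v_2\}$ matchings and the $\{v_3,v_4\}$ matchings, so that the ``all internal'' entry $f_{0000}$ is decoupled from (and can be made much smaller than) the product $f_{0011}f_{1100}$, which is forced to be $\le f_{0000}$ in any construction whose two sub-matchings are vertex-disjoint. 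The geometric lemma (\lemref{lem:pm-easy_geometry}) is then the assertion that the induced map from the nonnegative orthant of weight-space \emph{onto} the cone $C=\{P\in\mathbb{R}^4_{\ge 0}:P_i\le\sum_{j\ne i}P_j\ \forall i\}$ is surjective. I would prove it by first describing $C$ concretely---as the conical hull of the six rays $e_i+e_j$, equivalently as the intersection of its eight facet half-spaces---then realizing each facet by a subfamily of the construction, and finally closing up either by a continuity argument (the image is a closed cone that meets all eight facets and contains an interior point, hence equals $C$) or, better, by an explicit inversion in which the weights are read off from a $3$-dimensional picture of the target $P$ (for instance, regarding $(P_1,\dots,P_4)$ as the edge lengths of a closed spatial quadrilateral---whose existence is governed precisely by $P_i\le\sum_{j\ne i}P_j$---and extracting the weights from the edge vectors).

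Assembling: given $f\in\mathcal{S}^\textup{E}_{\le^2}$, apply the geometric lemma to $(a_1a_2,b_1b_2,c_1c_2,d_1d_2)$ to obtain a constant-size nonnegative matchgate with the right products, then attach the four diagonal dongles to fix the individual entries $a_1,a_2,b_1,b_2,c_1,c_2,d_1,d_2$ exactly; the whole gadget has size at most an absolute constant and uses only nonnegative weights. The step I expect to be the genuine obstacle is the surjectivity in the geometric lemma: relabeling the dangling edges only produces the $S_3$ of symmetries fixing $P_4$ (not the full $S_4$ symmetry of $C$), and there is a real sign obstruction to an $S_4$-symmetric \emph{nonnegative} matchgate, so the ``obvious'' families sweep out only a single facet of $C$; one must therefore find a construction rich enough to cover all of $C$ and then verify the covering, which is the geometric heart of the proof.
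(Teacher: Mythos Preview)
Your reduction to the four complementary products $P_i$ via diagonal dongles is correct and is in fact cleaner than what the paper does: the paper adjusts the ratios $a_1/a_2,\dots,d_1/d_2$ by ad hoc rescalings of the fourteen edge weights inside its $K_6$ gadget, whereas your Hadamard-on-logarithms argument with four binary $[s_i,0,t_i]$ matchgates works uniformly and modularly. So the first half of your plan is a genuine simplification.

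The gap is in the second half. You never actually exhibit the ``explicit parametrized matchgate'' you promise, and your two proposed routes to surjectivity do not work as stated. The continuity argument---``the image is a closed cone that meets all eight facets and contains an interior point, hence equals $C$''---is false: a set with those properties can be a proper subset of $C$ (think of a thin star reaching each facet), and there is no reason the image of a polynomial map from weight-space should even be convex or a cone. The spatial-quadrilateral picture correctly characterizes $C$ but you give no mechanism for reading matchgate edge weights off the quadrilateral's edge vectors, so it remains a heuristic.

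What the paper actually does is quite different from a surjectivity argument. It fixes a specific weighted $K_6$ (four external vertices plus two internal ones, fourteen weights), and observes that after the self-inverse linear change of variables $(a^{**},b^{**},c^{**},d^{**})\mapsto(A,B,C,D)$ the target region becomes the set $U$ of \lemref{lem:pm-easy_geometry}. The structure of the $K_6$ forces $A=A'+S+T$, $B=B'+R+T$, $C=C'+R+S$, $D=A'+B'+C'$, where $(A',B',C')$ depends only on the eight $p_i,q_i$ weights and $(R,S,T)$ only on the six $r_i,s_i,t_i$ weights. Normalizing $D=1$, realizability reduces to writing $(A,B,C)$ as a point of the open simplex $V=\{A'+B'+C'=1\}$ plus a point of the triangle-inequality cone $W$; \lemref{lem:pm-easy_geometry} proves $U=V+W$ by a concrete geometric argument (slide the cone $W$ along the three sides of the triangle $V$, then fill the remaining tetrahedron with the diagonal ray). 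This Minkowski-sum decomposition is the real content, and nothing in your outline supplies it or a substitute for it.
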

\begin{proof}
\captionsetup[subfigure]{labelformat=parens}
\begin{figure}[h!]
\centering
\begin{subfigure}[b]{0.3\linewidth}
\centering\includegraphics[width=\linewidth]{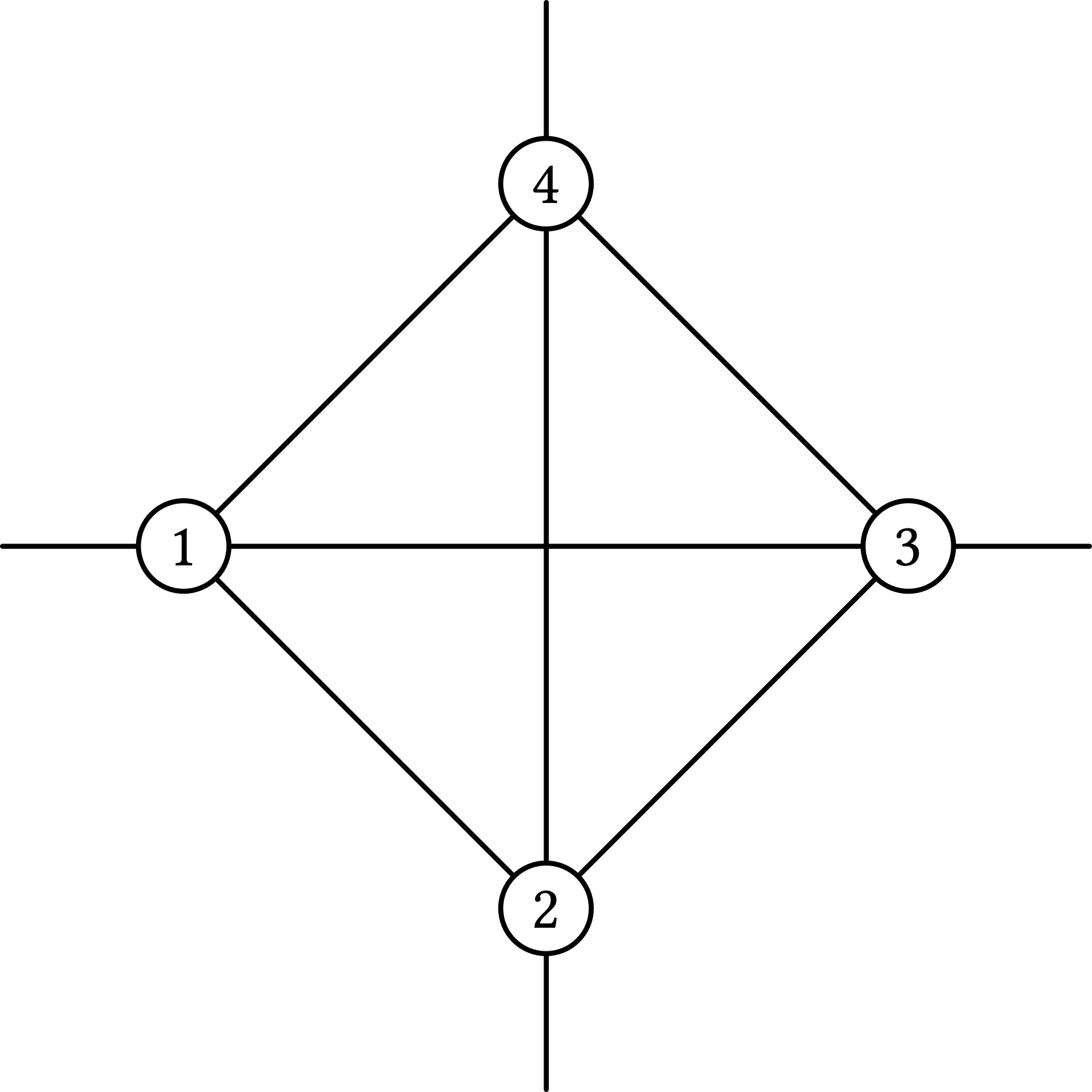}\caption{}\label{fig:matchgate_d}
\end{subfigure}
\hspace{0.03\linewidth}
\begin{subfigure}[b]{0.3\linewidth}
\centering\includegraphics[width=\linewidth]{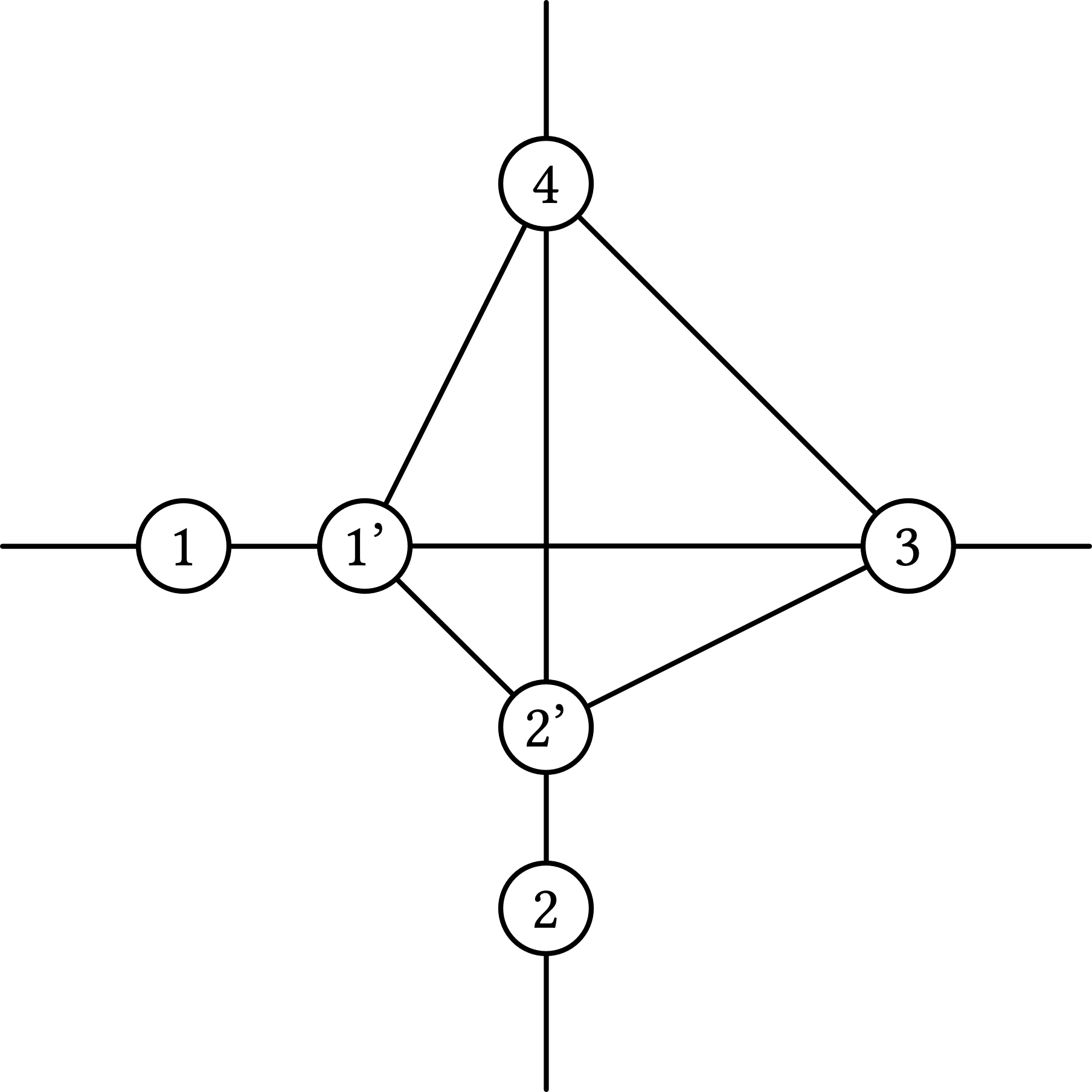}\caption{}\label{fig:matchgate_a}
\end{subfigure}
\hspace{0.03\linewidth}
\begin{subfigure}[b]{0.3\linewidth}
\centering\includegraphics[width=\linewidth]{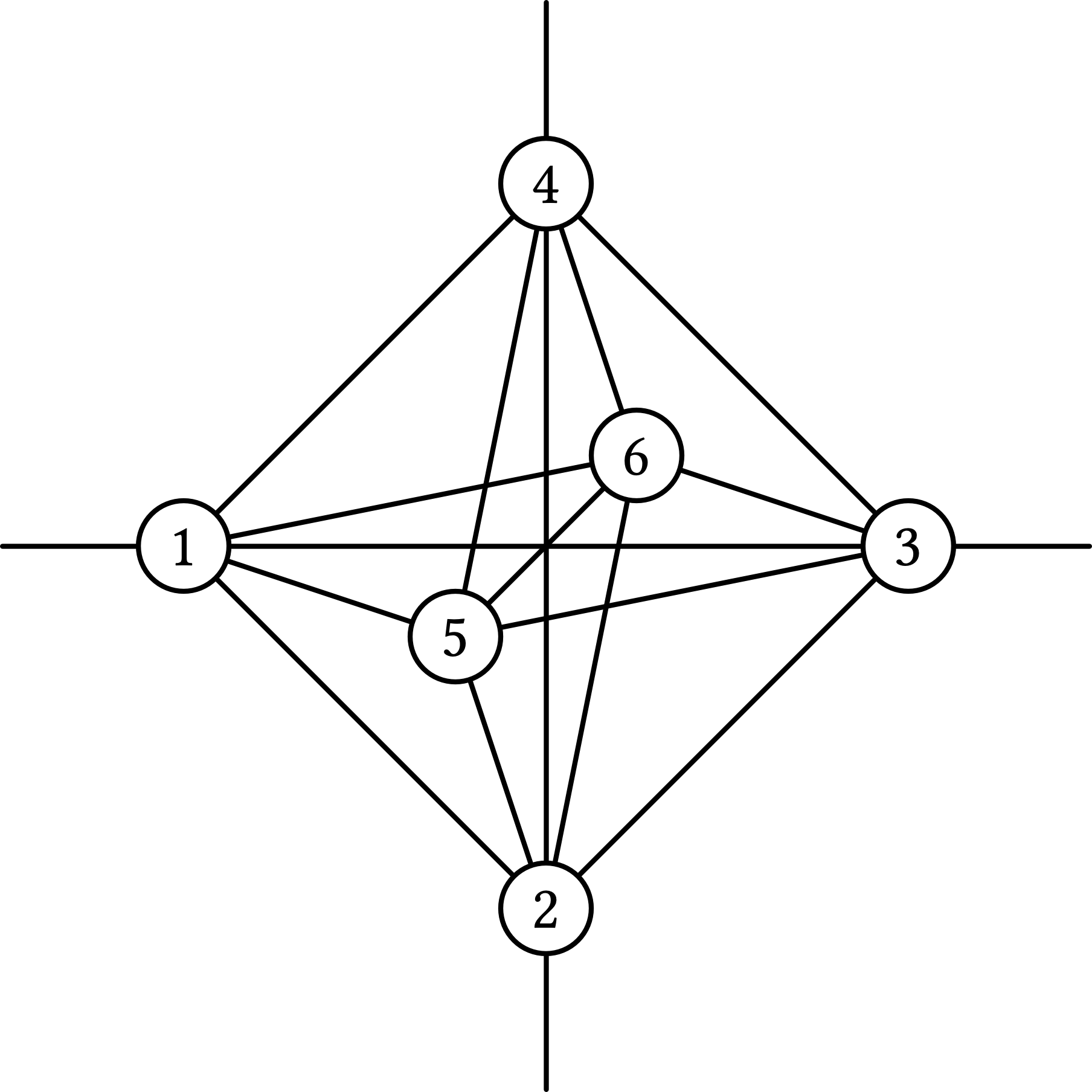}\caption{}\label{fig:matchgate_interior}
\end{subfigure}
\caption{4-ary matchgates.}\label{fig:matchgates}
\end{figure}

We first note that if any of the four inequalities in the definition of
$\mathcal{S}_{\le^2}$ is an equality, then  the remaining three
inequalities automatically hold, since the 8 values $a_1, \ldots, d_2$
 are all nonnegative.

Given a constraint function $\left[\begin{smallmatrix} d_1 & 0 & 0 & a_1 \\ 0 & b_1 & c_1 & 0 \\ 0 & c_2 & b_2 & 0 \\ a_2 & 0 & 0 & d_2 \end{smallmatrix}\right]$,
first we construct a matchgate for $d_1d_2 = a_1a_2 + b_1b_2 + c_1c_2$.
If $d_1d_2=0$ then all four products $a_1a_2 = b_2b_2=c_1c_2= d_1d_2 =0$,
and one can easily adapt from the following proof to show that the signature is
realizable as a matchgate signature. 
So it suffices to implement the normalized version
$\left[\begin{smallmatrix} a_1a_2 + b_1b_2 + c_1c_2 & 0 & 0 & a_1 \\ 0 & b_1 & c_1 & 0 \\ 0 & c_2 & b_2 & 0 \\ a_2 & 0 & 0 & 1 \end{smallmatrix}\right]$.
Our construction is a weighted $K_4$ depicted in \figref{fig:matchgate_d}.
Let $e_1, e_2, e_3, e_4$ be the dangling edges incident to vertices $1, 2, 3, 4$, respectively.
Denote by $w_{ij}$ the weight on the edge between vertex $i$ and vertex $j$.
One can check that the following weight assignment meets our need: $w_{12} = a_1, w_{34} = a_2, w_{14} = b_1, w_{23} = b_2, w_{13} = c_1, w_{24} = c_2$.

For $a_1a_2 = b_1b_2 + c_1c_2 + d_1d_2$,
without loss of generality we assume $a_1a_2 \not =0$
and we normalize $a_1=1$.
Then our construction is shown in \figref{fig:matchgate_a} where we set
$w_{11'} = 1, w_{22'} = 1, w_{1'2'} = d_2, w_{34} = d_1, w_{1'4} = c_2, w_{2'3} = c_1, w_{1'3} = b_2, w_{2'4} = b_1$.
One can verify that it realizes the normalized constraint function
$\left[\begin{smallmatrix} d_1 & 0 & 0 & 1 \\ 0 & b_1 & c_1 & 0 \\ 0 & c_2 & b_2 & 0 \\ b_1b_2 + c_1c_2 + d_1d_2 & 0 & 0 & d_2 \end{smallmatrix}\right]$.
The construction for $b_1b_2 = a_1a_2 + c_1c_2 + d_1d_2$ and $c_1c_2 = a_1a_2 + b_1b_2 + d_1d_2$ are symmetric to the above case.

It remains to show that the interior
\begin{equation}\label{eqn:interior_1}
\left\{\begin{matrix}
a_1a_2 & < & b_1b_2 + c_1c_2 + d_1d_2 \\
b_1b_2 & < & a_1a_2 + c_1c_2 + d_1d_2 \\
c_1c_2 & < & a_1a_2 + b_1b_2 + d_1d_2 \\
d_1d_2 & < & a_1a_2 + b_1b_2 + c_1c_2
\end{matrix}\right.
\end{equation}
can all be reached.
We first deal with the case when all eight parameters are strictly positive and leave the other cases to the end of this proof.
We use  a weighted $K_6$ to be our matchgate depicted in \figref{fig:matchgate_interior},
and set $w_{12} = r_1, w_{34} = r_2, w_{14} = s_1, w_{23} = s_2, w_{13} = t_1, w_{24} = t_2, w_{15} = p_1, w_{25} = p_2, w_{35} = p_3, w_{45} = p_4, w_{16} = q_1, w_{26} = q_2, w_{36} = q_3, w_{46} = q_4, w_{56} = 1$.
Then the matchgate has a singature with the following parameters
\begin{align*}
a'_1 & = r_1 + p_1q_2 + p_2q_1, &
a'_2 & = r_2 + p_3q_4 + p_4q_3,\\
b'_1 & = s_1 + p_1q_4 + p_4q_1, &
b'_2 & = s_2 + p_2q_3 + p_3q_2,\\
c'_1 & = t_1 + p_1q_3 + p_3q_1, &
c'_2 & = t_2 + p_2q_4 + p_4q_2,\\
d'_1 & = (r_1r_2 + s_1s_2 + t_1t_2) + & d'_2 & =  1,\\
 &~~~~(p_3q_4 + p_4q_3)r_1 + (p_1q_2 + p_2q_1) r_2 + & \\
 &~~~~(p_2q_3 + p_3q_2) s_1 + (p_1q_4 + p_4q_1) s_2 + & \\
 &~~~~(p_2q_4 + p_4q_2) t_1 + (p_1q_3 + p_3q_1) t_2. &
\end{align*}
Note that all the edge weights have to be nonnegative.
By properly setting the edge weights in the matchgate, we show that we can achieve any relative ratios among the eight given positive
 values $a_1, a_2, b_1, b_2, c_1, c_2, d_1, d_2$ that satisfy (\ref{eqn:interior_1}).
Our first step is to achieve any relative ratios among the four
product values $a_1a_2$, $b_1b_2$, $c_1c_2$, $d_1d_2$ satisfying (\ref{eqn:interior_1});
and the second step is to adjust the relative ratio within the pairs
 $\{a_1, a_2\}$, $\{b_1, b_2\}$,  $\{d_1, d_2\}$  and $\{c_1, c_2\}$
without affecting the product values.
This can be justified by the observation that,
by a scaling a global positive constant can be easily achieved, and
all appearances of $a_1$ and $a_2$ in (\ref{eqn:interior_1}) are as a product $a_1a_2$, and similarly for $b_1, b_2, c_1, c_2, d_1, d_2$.

For the fourteen edge weights $r_1, \ldots, q_4$ to be determined,
let
\begin{equation}\label{eqn:ratio_2}
\left\{\begin{matrix}
A'  = p_1p_2q_3q_4 + p_3p_4q_1q_2,
& R = r_1r_2 + r_1(p_3q_4 + p_4q_3) + r_2(p_1q_2 + p_2q_1), \\
B'  = p_1p_4q_2q_3 + p_2p_3q_1q_4,
& S = s_1s_2 + s_1(p_2q_3 + p_3q_2) + s_2(p_1q_4 + p_4q_1), \\
C'  = p_1p_3q_2q_4 + p_2p_4q_1q_3,
& T = t_1t_2 + t_1(p_2q_4 + p_4q_2) + t_2(p_1q_3 + p_3q_1),
\end{matrix}\right.
\end{equation}
and 
define
\begin{equation}\label{eqn:ratio_1}
\left\{\begin{matrix}
A &=&  A' + S + T\\
B &=&  B' + R + T\\
C &=&  C' + R + S\\
D &=&  ~~~A' + B' + C'.
\end{matrix}\right.
\end{equation}
Note that $A', B', C', R, S, T$ are all nonnegative and so are $A, B, C, D$.

Our goal is to choose the fourteen edge weights $r_1, \ldots, q_4$ 
so that $A, B, C, D$ are all positive and satisfy
\begin{equation}\label{eqn:ratio_3}
\left\{\begin{matrix}
A &= \frac{1}{2}(b_1b_2 + c_1c_2 + d_1d_2 - a_1a_2) \\
B &= \frac{1}{2}(a_1a_2  + c_1c_2 + d_1d_2 - b_1b_2) \\
C &= \frac{1}{2}(a_1a_2 + b_1b_2 + d_1d_2 - c_1c_2) \\
D &= \frac{1}{2}(a_1a_2 + b_1b_2 + c_1c_2 - d_1d_2). 
\end{matrix}\right.
\end{equation}
Note that, by definition, the left-side of (\ref{eqn:ratio_3}) is precisely
the right-side of (\ref{eqn:ratio_3}) when $a_1, \ldots, d_2$ are
replaced by $a'_1, \ldots, d'_2$ respectively.
Denote the products $a_1a_2, b_1b_2, c_1c_2, d_1d_2$ by $a^{**}, b^{**}, c^{**}, d^{**}$ respectively.
Then  (\ref{eqn:ratio_3}) is a set of four linear equations
$M \cdot \left[\begin{smallmatrix} a^{**} \\ b^{**} \\ c^{**} \\ d^{**} \end{smallmatrix}\right] = \left[\begin{smallmatrix} A \\ B \\ C \\ D \end{smallmatrix}\right]$,
where $M = \frac{1}{2}\left[\begin{smallmatrix} -1 & 1 & 1 & 1 \\ 1 & -1 & 1 & 1 \\ 1 & 1 & -1 & 1 \\ 1 & 1 & 1 & -1 \end{smallmatrix}\right]$.
Note that $M$ is invertible and $M^{-1} = M$, so 
(\ref{eqn:ratio_3}) is equivalent to
$M \cdot \left[\begin{smallmatrix} A \\ B \\ C \\ D \end{smallmatrix}\right] = \left[\begin{smallmatrix} a^{**} \\ b^{**} \\ c^{**} \\ d^{**} \end{smallmatrix}\right]$,
having an identical form.
Since the requirement (\ref{eqn:interior_1}) in terms of $a^{**}, b^{**}, c^{**}, d^{**}$ translates into the requirement $A, B, C, D$ being strictly positive via $M$, it is not surprising that the requirement $a^{**}, b^{**}, c^{**}, d^{**}$ being strictly positive translates into the requirement
\begin{equation}\label{eqn:interior_2}
\left\{\begin{matrix}
A < B + C + D\\
B < A + C + D\\
C < A + B + D\\
D < A + B + C,
\end{matrix}\right.
\end{equation}
and that
$A, B, C, D$ are positive is the same as (\ref{eqn:interior_1}).

Furthermore, let
$\left\{\begin{smallmatrix}
X = S + T \\
Y = R + T \\
Z = R + S
\end{smallmatrix}\right.$, then
 the requirement $R, S, T$ being positive is equivalent to the requirement
$\left\{\begin{smallmatrix}
Y + Z > X \\
X + Z > Y \\
X + Y > Z
\end{smallmatrix}\right.$.
This is because
$\left[\begin{smallmatrix} 0 & 1 & 1 \\ 1 & 0 & 1 \\ 1 & 1 & 0\end{smallmatrix}\right] \cdot \left[\begin{smallmatrix} R \\ S \\ T \end{smallmatrix}\right] = \left[\begin{smallmatrix} X \\ Y \\ Z \end{smallmatrix}\right]$
is the same as
$\frac{1}{2}\left[\begin{smallmatrix} -1 & 1 & 1 \\ 1 & -1 & 1 \\ 1 & 1 & -1\end{smallmatrix}\right] \cdot \left[\begin{smallmatrix} X \\ Y \\ Z \end{smallmatrix}\right] = \left[\begin{smallmatrix} R \\ S \\ T \end{smallmatrix}\right]$.

The crucial ingredient of our proof is a 
\emph{geometric} lemma in  3-dimensional space.
Suppose  $a^{**}, b^{**}, c^{**}, d^{**}$ are positive and they satisfy
 (\ref{eqn:interior_1}).
This defines 
$\left[\begin{smallmatrix} \tilde{A}\\ \tilde{B}\\ \tilde{C}\\ \tilde{D}
 \end{smallmatrix}\right] =
M \cdot \left[\begin{smallmatrix} a^{**} \\ b^{**} \\ c^{**} \\ d^{**} \end{smallmatrix}\right]$.
By a scaling we may assume $\tilde{D} =1$.
Hence $(\tilde{A}, \tilde{B}, \tilde{C}, \tilde{D})$ are positive and
satisfy (\ref{eqn:interior_2}).
Thus $(\tilde{A}, \tilde{B}, \tilde{C})$ belongs to the set
$U$ in the statement of \lemref{lem:pm-easy_geometry}.

By \lemref{lem:pm-easy_geometry},
there exist (strictly) positive tuples $(\tilde{A}', \tilde{B}', \tilde{C}')$ and $(\tilde{X}, \tilde{Y}, \tilde{Z})$ 
such that
\[(\tilde{A}, \tilde{B}, \tilde{C}) = (\tilde{A}', \tilde{B}', \tilde{C}') + (\tilde{X}, \tilde{Y}, \tilde{Z}),\]
satisfying
$\tilde{A}' + \tilde{B}' + \tilde{C}' = 1$ and
$\left\{\begin{smallmatrix}
\tilde{Y} + \tilde{Z} > \tilde{X}\\
\tilde{X} + \tilde{Z} > \tilde{Y}\\
\tilde{X} + \tilde{Y} > \tilde{Z}
\end{smallmatrix}\right.$.
By the previous observation this indicates that
there exist (strictly) positive $\tilde{A}', \tilde{B}', \tilde{C}', \tilde{R}, \tilde{S}, \tilde{T}$ such that
$\left\{\begin{smallmatrix}
\tilde{A} =  \tilde{A}' + \tilde{S} + \tilde{T}\\
\tilde{B} =  \tilde{B}' + \tilde{R} + \tilde{T}\\
\tilde{C} =  \tilde{C}' + \tilde{R} + \tilde{S}\\
\tilde{D} =  \tilde{A}' + \tilde{B}' + \tilde{C}'
\end{smallmatrix}\right.$.

We first set $p_i, q_i \ (1 \le i \le 4)$ so that $(A', B', C') = c \cdot (\tilde{A}', \tilde{B}', \tilde{C}')$ for some constant $c$.
To achieve this, set $q_1 = q_2 = q_3 = q_4 = 1$,
and let  $o_1, o_2, o_3$ be positive, and then
set $p_1 = \sqrt{\frac{o_2 o_3}{o_1}}$,
$p_2 = \sqrt{\frac{o_3 o_1}{o_2}}$,
$p_3 = \sqrt{\frac{o_1 o_2}{o_3}}$,
and $p_4 = \frac{1}{p_1p_2p_3}$.
We have  $p_1p_2p_3p_4 = 1$, and  $p_2 p_3 = o_1,
p_3p_1 = o_2, p_1p_2 = o_3$.
  Then set
$\left\{\begin{smallmatrix}
A'  = p_1p_2 + \frac{1}{p_1p_2} = o_3 + \frac{1}{o_3} \\
B'  = p_2p_3 + \frac{1}{p_2p_3} = o_1 + \frac{1}{o_1} \\
C'  = p_3p_1 + \frac{1}{p_3p_1} = o_2 + \frac{1}{o_2}
\end{smallmatrix}\right.$, which can be independently any positive numbers  
at least 2,
by choosing
$o_1, o_2, o_3$ to be suitable positive numbers.
This allows us to get $A', B', C'$ such that $(A', B', C') = c \cdot (\tilde{A}', \tilde{B}', \tilde{C}')$ for some constant $c$.
Then it is obvious that $r_1, r_2, s_1, s_2, t_1, t_2$ can be set so that $(R, S, T) = c \cdot (\tilde{R}, \tilde{S}, \tilde{T})$.
Compute $A, B, C, D$ according to (\ref{eqn:ratio_1}).
As a consequence, $(A, B, C, D) = c \cdot (\tilde{A}, \tilde{B}, \tilde{C}, \tilde{D})$ is a valid solution.

To adjust the relative ratio between $\{d_1, d_2\}$, say increasing $\frac{d_2}{d_1}$ by $\delta$, while keeping all product values and
the relative ratios within the other three pairs, just increase $r_1, r_2, s_1, s_2, t_1, t_2$ by $\delta^{1/2}$ and increase $p_i, q_i \ (1 \le i \le 4)$ by $\delta^{1/4}$.
Similarly, to increase $\frac{a_2}{a_1}$ by $\delta$ alone
without affecting the other products and ratios, just increase $r_2$ by $\delta^{1/2}$ and $p_3, p_4, q_3, q_4$ by $\delta^{1/4}$, and decrease $r_1$ by $\delta^{1/2}$ and $p_1, p_2, q_1, q_2$ by $\delta^{1/4}$.
The other cases are symmetric.

Finally we deal with the cases when there are zeros among the eight parameters.
Note that at most one of $a_1a_2, b_1b_2, c_1c_2, d_1d_2$ is zero, 
because if at least two products are zero, say $a_1a_2 = b_1b_2 = 0$, then 
(\ref{eqn:interior_1}) forces a contradiction 
that  $c_1c_2 < d_1d_2$ and $d_1d_2 < c_1c_2$.
In the case $d_1d_2 = 0$:
\begin{itemize}
\item
$d_1 = 0, d_2 \neq 0$: We make the modification that $w_{12} = w_{34} = w_{14} = w_{23} = w_{13} = w_{24} = 0$, i.e. $r_1, r_2, s_1, s_2, t_1, t_2 = 0$.
\item
$d_1 = d_2 = 0$: We make the further modification that $w_{56} = 0$.
\item
$d_1 \neq 0, d_2 = 0$: We connect the four dangling edges in \figref{fig:matchgate_interior} to four degree 2 vertices, respectively. This switches the role of $d_1$ and $d_2$ in the previous proof.
\end{itemize}
One can check our proof is still valid in the above three cases.
If $a_1a_2 = 0$, then we connect the dangling edges on vertices 1, 2 to two degree 2 vertices (similar to the operation from \figref{fig:matchgate_d} to \figref{fig:matchgate_a}). This switches the role of $d_1, d_2$ with $a_2, a_1$ and the proof folllows.
The proofs for $b_1b_2 = 0$ and $c_1c_2 = 0$ are symmetric.
\end{proof}

Now we give the crucial geometric lemma.
\begin{lemma}\label{lem:pm-easy_geometry}
Let $U = \{(x, y, z) \in \mathbb{R}^3_{>0} \ |\ 
x < y + z + 1,
y < x + z + 1,
z < x + y + 1,
1 < x + y + z
\}$,
$V = \{(x, y, z) \in \mathbb{R}^3_{>0} \ |\ x + y + z = 1\}$, and
$W = \{(x, y, z) \in \mathbb{R}^3_{>0} \ |\ 
y + z > x, 
x + z > y, 
x + y > z
\}$.
Then $U$ is the Minkowski sum of $V$ and $W$,
namely, $U$ consists of precisely those points 
$\boldsymbol{u} \in {\mathbb{R}}^3$,
such that $\boldsymbol{u} = \boldsymbol{v} + \boldsymbol{w}$
for some $\boldsymbol{v} \in V$ and $\boldsymbol{w} \in W$.
The same statement is true for the closures of $U$, $V$ and $W$.
\end{lemma}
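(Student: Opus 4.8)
The plan is to prove the two inclusions $V+W\subseteq U$ and $U\subseteq V+W$, which together give $U=V+W$ since, by definition, $V+W$ is exactly the set of $\boldsymbol u\in\mathbb{R}^3$ admitting a decomposition $\boldsymbol u=\boldsymbol v+\boldsymbol w$ with $\boldsymbol v\in V$ and $\boldsymbol w\in W$. Throughout I write $\{i,j,k\}=\{1,2,3\}$ and use repeatedly that any vector with coordinate sum $1$ satisfies $v_j+v_k=1-v_i$. The assertion for the closures will follow from the very same computations with each strict inequality relaxed to its non-strict form, so I focus on the open case.

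First I would handle the easy inclusion $V+W\subseteq U$. Given $\boldsymbol v\in V$ and $\boldsymbol w\in W$, put $\boldsymbol u=\boldsymbol v+\boldsymbol w$; then $\boldsymbol u>0$, and substituting $\boldsymbol u=\boldsymbol v+\boldsymbol w$ and $v_j+v_k=1-v_i$ turns the inequality $u_i<u_j+u_k+1$ into $2v_i+w_i-w_j-w_k<2$, which holds because $v_i<1$ and $w_i<w_j+w_k$. Moreover $u_1+u_2+u_3=1+(w_1+w_2+w_3)>1$. Hence $\boldsymbol u\in U$.

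The substance is the reverse inclusion $U\subseteq V+W$, and the idea is to build the summand $\boldsymbol v$ coordinate by coordinate. Fix $\boldsymbol u\in U$ and set $m_i=\tfrac12(1+u_i-u_j-u_k)$ and $\ell_i=\max(0,m_i)$. For any $\boldsymbol v$ with $v_1+v_2+v_3=1$, writing $\boldsymbol w=\boldsymbol u-\boldsymbol v$ and again using $v_j+v_k=1-v_i$, one gets $w_i<w_j+w_k\iff v_i>m_i$ and $w_i>0\iff v_i<u_i$; consequently the conjunction ``$\boldsymbol v\in V$ and $\boldsymbol w\in W$'' is equivalent to requiring $v_i\in(\ell_i,u_i)$ for every $i$ together with $v_1+v_2+v_3=1$. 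Such a $\boldsymbol v$ exists exactly when each interval $(\ell_i,u_i)$ is nonempty and $\sum_i\ell_i<1<\sum_i u_i$, because then the segment $t\mapsto\bigl(\ell_i+t(u_i-\ell_i)\bigr)_i$, $t\in(0,1)$, crosses the plane $\sum_i v_i=1$. Here $\ell_i<u_i$ holds since $0<u_i$ and, when $\ell_i=m_i$, the inequality $m_i<u_i$ is just $u_i+u_j+u_k>1$; and $\sum_i u_i>1$ is one of the defining inequalities of $U$.

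The one genuinely nontrivial point, and where I expect the work to lie, is the bound $\sum_i\ell_i<1$. I would prove it by cases according to how many of $m_1,m_2,m_3$ are positive. If none, $\sum_i\ell_i=0<1$. If exactly one, say $m_1>0$, then $\sum_i\ell_i=m_1$ and $m_1<1$ is precisely $u_1<u_2+u_3+1$. If exactly two, say $m_1,m_2>0$, then $\sum_i\ell_i=m_1+m_2=1-u_3<1$ by $u_3>0$. If all three are positive, then $\sum_i\ell_i=m_1+m_2+m_3=\tfrac12\bigl(3-(u_1+u_2+u_3)\bigr)<1$, which is $u_1+u_2+u_3>1$. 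Thus each of the four halfspace inequalities defining $U$ is used exactly once (together with positivity of the coordinates), which is the structural reason this is the correct description of $U$. Finally, for the $\boldsymbol v$ produced above, $\boldsymbol w=\boldsymbol u-\boldsymbol v$ satisfies $w_i=u_i-v_i>0$ and, since $v_i>\ell_i\ge m_i$, also $w_i<w_j+w_k$; hence $\boldsymbol w\in W$ and $\boldsymbol v\in V$, so $\boldsymbol u\in V+W$. Rerunning the entire argument with ``$\le$''/``$\ge$'' in place of ``$<$''/``$>$'' (allowing the intervals to degenerate to points) yields $\overline U=\overline V+\overline W$.
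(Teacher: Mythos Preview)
Your proof is correct and takes a genuinely different route from the paper's. The paper argues geometrically: it views $W$ as a three-faceted cone and $V$ as a triangle, then slides the cone $W$ along the three sides of the triangle, checking that the swept regions cover the three ``outer'' pieces of $U$ (where some $u_i+u_j-u_k>1$), and finally covers the leftover central tetrahedron by the rays $\boldsymbol v+\lambda(1,1,1)$ with $\boldsymbol v\in V$, $\lambda>0$. Your argument instead reduces the inclusion $U\subseteq V+W$ to an inequality feasibility problem: you rewrite ``$\boldsymbol v\in V$ and $\boldsymbol u-\boldsymbol v\in W$'' as the box constraint $v_i\in(\ell_i,u_i)$ together with $\sum v_i=1$, and then verify $\sum\ell_i<1<\sum u_i$ by a four-way case split on the signs of the $m_i$. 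What your approach buys is a completely elementary, coordinate-level proof that makes transparent exactly which defining inequality of $U$ is consumed in each case (and the observation that each is used once is a nice structural point). What the paper's approach buys is geometric intuition for \emph{why} $U$ has the shape it does as a Minkowski sum, and a picture that motivates the matchgate construction in the surrounding lemma. Both the open and closed versions go through cleanly in your argument; the segment/IVT step is the standard fact that a box $\prod_i[\ell_i,u_i]$ meets the affine hyperplane $\sum v_i=1$ exactly when $\sum\ell_i\le 1\le\sum u_i$.
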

\begin{proof}
Observe that $U$, $V$, and $W$ are the interiors of a polyhedron with
7 facets, a regular triangle, and a polyhedron with 3 facets,
respectively. 


The polyhedron for $W$ is the intersection of three
half spaces    bounded by three planes,
$\left\{\begin{smallmatrix}
(\pi_1): y + z \ge x \\
(\pi_2): x + z \ge y \\
(\pi_3): x + y \ge z
\end{smallmatrix}\right.$,
where the planes are defined by equalities.
Note that these inequalities imply that $x, y, z \ge 0$, thus
this polyhedron has only three facets.
We can find the intersection of each pair of the three planes for $W$ as
$\left\{\begin{smallmatrix}
\pi_1 \cap \pi_2: x = y \ge 0, z = 0 \\
\pi_1 \cap \pi_3: x = z \ge 0, y = 0 \\
\pi_2 \cap \pi_3: y = z \ge 0, x = 0
\end{smallmatrix}\right.$.
Note that these intersections lie on the planes $z = 0$, $y = 0$, and $x = 0$, respectively.

\captionsetup[subfigure]{labelformat=parens}
\begin{figure}[h!]
\centering
\begin{subfigure}[t]{0.42\linewidth}
\includegraphics[width=0.8\linewidth]{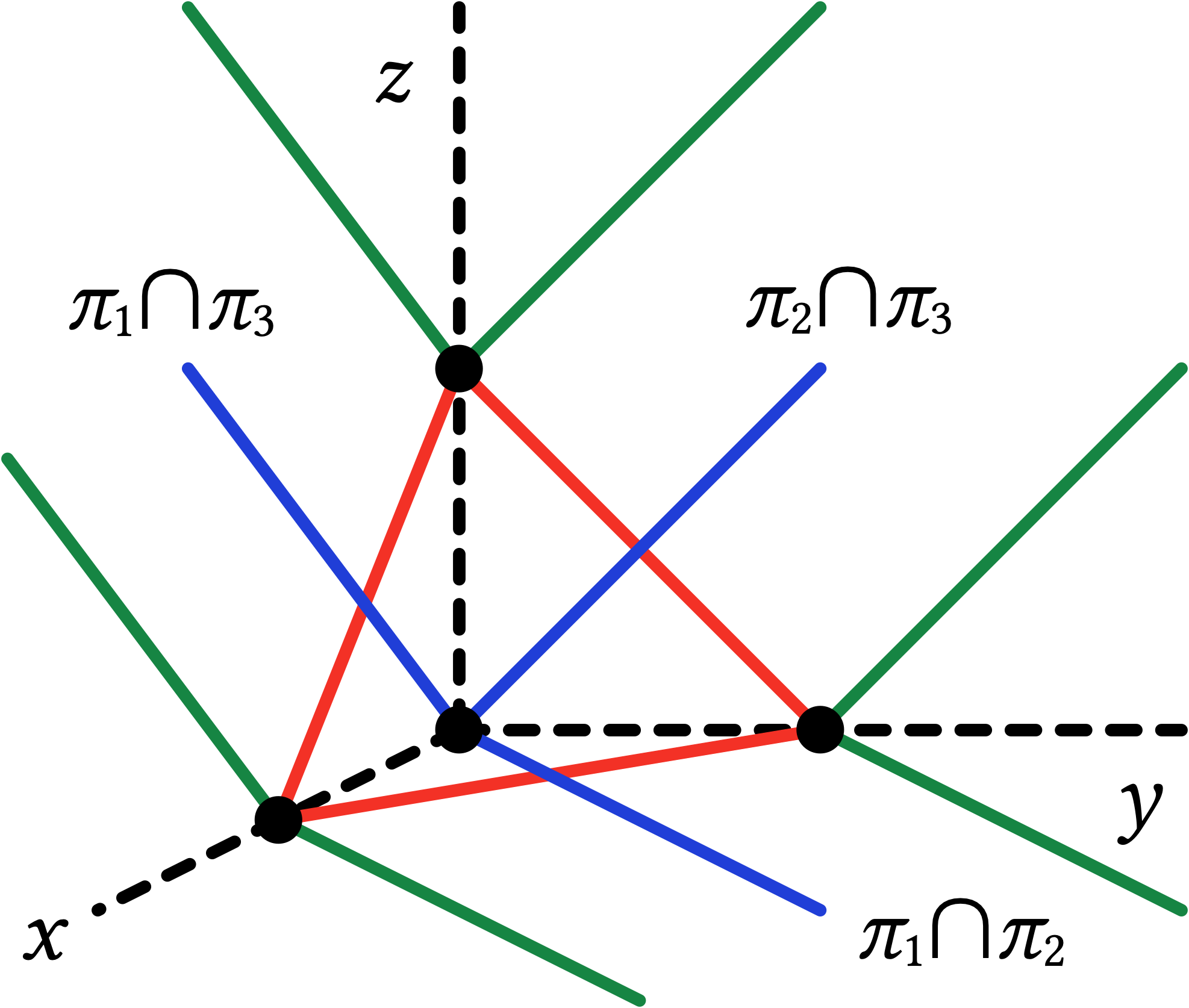}
\caption{The blue rays are the intersections of the three facets for $W$. The red triangle is the boundary for $V$. The green rays together with the red triangle are the intersections of the seven facets for $U$.}
\label{fig:geometry}
\end{subfigure}
\hspace{0.05\linewidth}
\begin{subfigure}[t]{0.42\linewidth}
\centering
\includegraphics[width=0.8\linewidth]{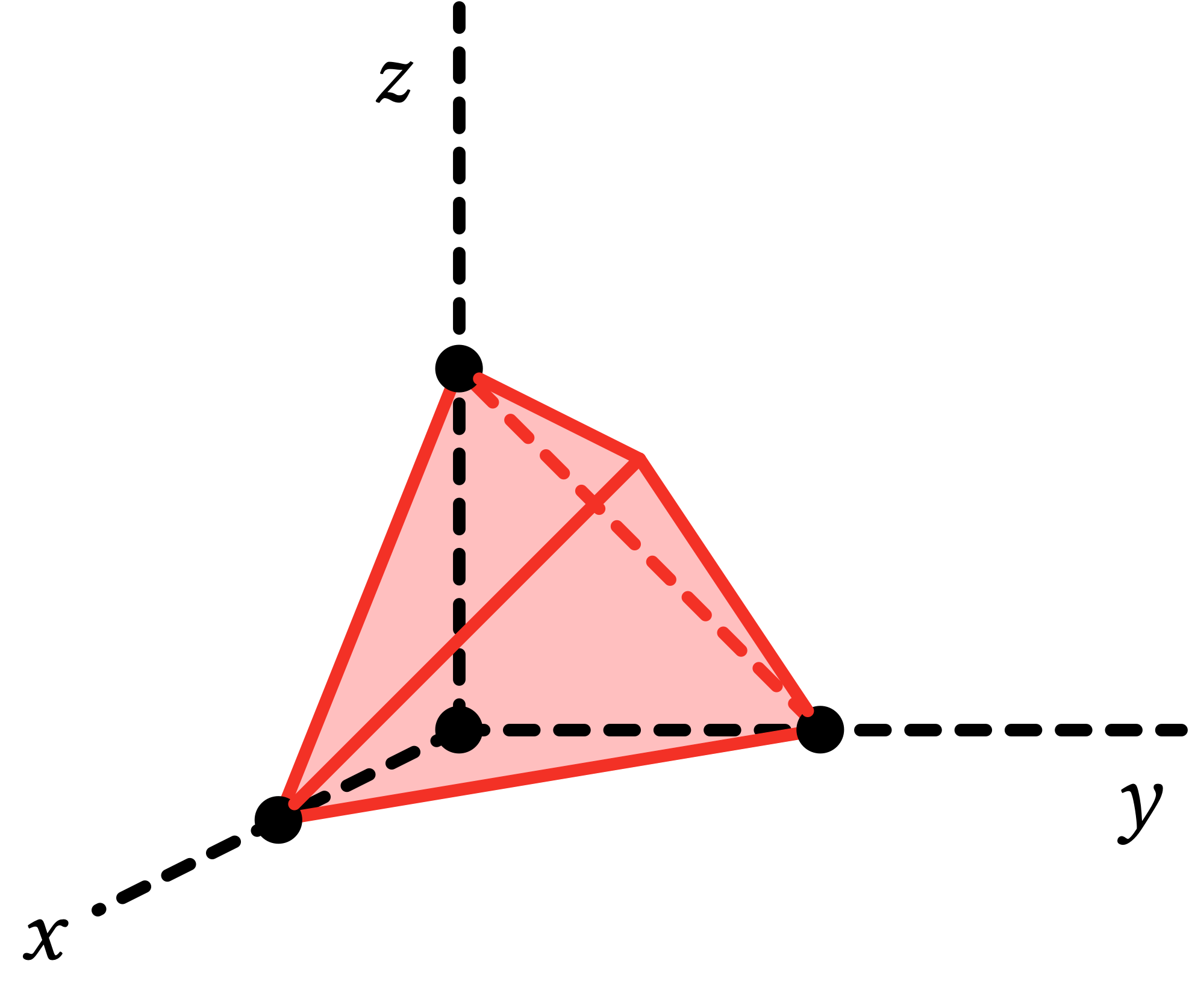}
\caption{The tetrahedron in $U$ that is left uncovered by
sliding $W$ along the boundary of $V$, but is covered by the rays
from the simplex $V$ in the direction of $(1,1,1)$.}
\label{fig:tetrahedron}
\end{subfigure}
\caption{}
\end{figure}

Let $\boldsymbol{e}_1 = (1, 0, 0), \boldsymbol{e}_2 = (0, 1, 0), \boldsymbol{e}_3 = (0, 0, 1)$.
Then the triangle for $V$ is just the convex hull of $\boldsymbol{e}_1, \boldsymbol{e}_2, \boldsymbol{e}_3$.
Suppose we shift the origin of  $W$ from $\boldsymbol{0}$ to $\boldsymbol{e}_1$ and denote the resulting (interior of a) polyhedron by $W^{\boldsymbol{e}_1}$,
then we have the defining inequalities
$\left\{\begin{smallmatrix}
(\pi_1^{\boldsymbol{e}_1}): y + z \ge (x - 1) \\
(\pi_2^{\boldsymbol{e}_1}): (x - 1) + z \ge y \\
(\pi_3^{\boldsymbol{e}_1}): (x - 1) + y \ge z
\end{smallmatrix}\right.$,
where the shifted planes are defined by the corresponding equalities.
By symmetry, if we shift  the origin of $W$ to 
 $\boldsymbol{e}_2$ and to $\boldsymbol{e}_3$, we have
respectively  $W^{\boldsymbol{e}_2}$ with
$\left\{\begin{smallmatrix}
(\pi_1^{\boldsymbol{e}_2}): (y - 1) + z \ge x \\
(\pi_2^{\boldsymbol{e}_2}): x + z \ge (y - 1) \\
(\pi_3^{\boldsymbol{e}_2}): x + (y - 1) \ge z
\end{smallmatrix}\right.$, and
$W^{\boldsymbol{e}_3}$ with
$\left\{\begin{smallmatrix}
(\pi_1^{\boldsymbol{e}_3}): y + (z - 1) \ge x \\
(\pi_2^{\boldsymbol{e}_3}): x + (z - 1) \ge y \\
(\pi_3^{\boldsymbol{e}_3}): x + y \ge (z - 1)
\end{smallmatrix}\right.$.
Note that the shifted planes
 $\pi_1^{\boldsymbol{e}_1}$, $\pi_2^{\boldsymbol{e}_2}$, and $\pi_3^{\boldsymbol{e}_3}$ contain three distinct
facets of $U$, and they coincide
exactly with a facet of  $W^{\boldsymbol{e}_1}$, $W^{\boldsymbol{e}_2}$, and $W^{\boldsymbol{e}_3}$, respectively.

By sliding $W$ with its origin along the line   
 $x + y = 1, z = 0$  from $\boldsymbol{e}_1$ to $\boldsymbol{e}_2$,
  we have
a partial coverage of $U$ by the shifting copies of
 $W$ from $W^{\boldsymbol{e}_1}$
and $W^{\boldsymbol{e}_2}$:
\begin{itemize}
\item
The shifted ray of  $\pi_1 \cap \pi_2: x = y \ge 0, z = 0$
moves from 
$\pi_1^{\boldsymbol{e}_1} \cap \pi_2^{\boldsymbol{e}_1}:
 (x - 1) = y \ge 0, z = 0$
to $\pi_1^{\boldsymbol{e}_2} \cap \pi_2^{\boldsymbol{e}_2}:
 x = (y - 1) \ge 0, z = 0$. Notice that this is a parallel transport,
and stays on the plane $z=0$, and thus it 
swipes another facet of $U$ on $z = 0$ bounded by
the two lines $x-y =-1$ and $x-y =1$.
\item
The shifted ray of 
$\pi_1 \cap \pi_3: x = z \ge 0, y=0$ moves from 
$\pi_1^{\boldsymbol{e}_1} \cap \pi_3^{\boldsymbol{e}_1}: (x - 1) = z \ge 0, y = 0$
to $\pi_1^{\boldsymbol{e}_2} \cap \pi_3^{\boldsymbol{e}_2}: x = z \ge 0, (y - 1) = 0$;
the shifted ray of
$\pi_2 \cap \pi_3: y = z \ge 0, x = 0$ moves from
$\pi_2^{\boldsymbol{e}_1} \cap \pi_3^{\boldsymbol{e}_1}: y = z \ge 0, (x - 1) = 0$
to $\pi_2^{\boldsymbol{e}_2} \cap \pi_3^{\boldsymbol{e}_2}: (y - 1) = z \ge 0, x = 0$.
Notice that both stay on the plane $x + y - z = 1$ 
which is $\pi_3^{\boldsymbol{e}_1} = \pi_3^{\boldsymbol{e}_2}$.
\end{itemize}
It follows that the part of $U$ satisfying $x + y - z > 1$ 
is covered by the  Minkowski sum of $W$ and the line segment
on  $x + y = 1, z = 0$  from $\boldsymbol{e}_1$ to $\boldsymbol{e}_2$
(which is a side of the triangle $V$).

Symmetrically, after sliding $W$ with its origin
from $\boldsymbol{e}_2$  to $\boldsymbol{e}_3$
along the line $y + z = 1, x = 0$
we get the parallel tranport from $W^{\boldsymbol{e}_2}$ to 
$W^{\boldsymbol{e}_3}$.
Also  after sliding $W$ with its origin
 from $\boldsymbol{e}_3$ back to $\boldsymbol{e}_1$
along the line segment $x + z = 1, y = 0$
we get the parallel tranport from
$W^{\boldsymbol{e}_3}$ to $W^{\boldsymbol{e}_1}$.
After these, the only subset in $U$ that is left uncovered by shifting
copies of $W$ is
$U \cap \{(x, y, z) \ |
\left\{\begin{smallmatrix}
- x + y + z \le 1 \\
x - y + z \le 1 \\
x + y - z \le 1
\end{smallmatrix}\right.\} = \{(x, y, z)  \in \mathbb{R}^3_{>0} \ |
\left\{\begin{smallmatrix}
x + y + z \ge 1\\
- x + y + z \le 1 \\
x - y + z \le 1 \\
x + y - z \le 1
\end{smallmatrix}\right.\}$ --- a \emph{tetrahedron} (\figref{fig:tetrahedron}).
However this subset can be covered by the rays $\{ \boldsymbol{v} + 
\lambda (1, 1, 1) \ |\ \boldsymbol{v} \in V,  \lambda >0 \}$.
Note that $ \lambda (1,1,1) \in W$ for all $\lambda >0$.
This completes the proof.
\end{proof}

\begin{lemma}\label{lem:pm-easy_tight}
Suppose $f$ is the constraint function of a 4-ary matchgate with $M(f) = \left[\begin{smallmatrix} d_1 & 0 & 0 & a_1 \\ 0 & b_1 & c_1 & 0 \\ 0 & c_2 & b_2 & 0 \\ a_2 & 0 & 0 & d_2 \end{smallmatrix}\right]$.
Then $f \in \mathcal{S}^\textup{E}_{\le^2}$.
In particular, if $f$ satisfies arrow reversal symmetry, $f \in \textup{\texttt{SQ-SUM}}$.
\end{lemma}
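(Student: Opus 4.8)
The plan is to work directly with a matchgate realizing $f$, translate each of the four inequalities defining $\mathcal{S}^{\textup{E}}_{\le^2}$ into an inequality among weighted sums of perfect matchings, and then prove the latter by an alternating-path exchange argument.

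First I would fix a matchgate $\Gamma$ with nonnegative edge weights realizing $f$, with dangling edges $1,2,3,4$ attached to vertices $v_1,v_2,v_3,v_4$. Appending a path of two weight-$1$ edges to the free end of each dangling edge leaves the signature unchanged, so I may assume the $v_i$ are distinct. Let $\Gamma'$ be $\Gamma$ with its four dangling edges removed, and for $S\subseteq\{v_1,v_2,v_3,v_4\}$ let $\mathrm{PM}(\Gamma'-S)$ denote the sum over perfect matchings of $\Gamma'-S$ of the product of edge weights. Reading off \defnref{defn:matchgate}, the eight entries of $M(f)$ are exactly $d_1=\mathrm{PM}(\Gamma')$, $d_2=\mathrm{PM}(\Gamma'-v_1-v_2-v_3-v_4)$, $a_1=\mathrm{PM}(\Gamma'-v_3-v_4)$, $a_2=\mathrm{PM}(\Gamma'-v_1-v_2)$, $b_1=\mathrm{PM}(\Gamma'-v_2-v_3)$, $b_2=\mathrm{PM}(\Gamma'-v_1-v_4)$, $c_1=\mathrm{PM}(\Gamma'-v_2-v_4)$, $c_2=\mathrm{PM}(\Gamma'-v_1-v_3)$; in particular all eight are nonnegative. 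Hence the four products $a_1a_2,\,b_1b_2,\,c_1c_2,\,d_1d_2$ are precisely the quantities $\mathrm{PM}(\Gamma'-S)\cdot\mathrm{PM}(\Gamma'-T)$, where $\{S,T\}$ ranges over the four partitions of $\{v_1,v_2,v_3,v_4\}$ into two blocks of even size.

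It then suffices to prove, for each such partition $\{S,T\}$,
\[ \mathrm{PM}(\Gamma'-S)\cdot\mathrm{PM}(\Gamma'-T)\ \le\ \sum_{\{S',T'\}\neq\{S,T\}}\mathrm{PM}(\Gamma'-S')\cdot\mathrm{PM}(\Gamma'-T'). \]
Fix $N_1\in\mathrm{PM}(\Gamma'-S)$ and $N_2\in\mathrm{PM}(\Gamma'-T)$ and consider the superposition $N_1\cup N_2$: every vertex outside $\{v_1,\dots,v_4\}$ has degree $2$, while each $v_i$ has degree $1$, so $N_1\cup N_2$ decomposes into alternating cycles (a doubly used edge counting as a $2$-cycle) together with exactly two paths whose four endpoints are $v_1,\dots,v_4$; the two paths split $\{v_1,\dots,v_4\}$ into two pairs, giving one of the three ``$(2,2)$'' partitions $\mathcal{P}$. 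Along each path the edges alternate between $N_1$ and $N_2$, and the type at an endpoint $v_i$ is forced by whether $v_i\in S$ or $v_i\in T$; consequently, for each path, either one of $N_1,N_2$ covers all of its vertices while the other misses exactly its two endpoints, or each of $N_1,N_2$ misses exactly one endpoint. Choosing, for each of the two paths, which of its ``$N_1$-part'' and ``$N_2$-part'' to put into a new matching $M_1$ (the cycles and doubly used edges being forced, with $M_1$ inheriting the $N_1$-edges and $M_2$ the $N_2$-edges), and letting $M_2$ take the complementary halves, one gets a pair $(M_1,M_2)\in\mathrm{PM}(\Gamma'-S')\times\mathrm{PM}(\Gamma'-T')$; a short case analysis shows that for each $(\{S,T\},\mathcal{P})$ there is a canonical such choice with $\{S',T'\}$ depending only on $(\{S,T\},\mathcal{P})$ and different from $\{S,T\}$, and that the three values of $\mathcal{P}$ yield the three partitions other than $\{S,T\}$. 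Because the two halves of each path (resp.\ the $N_1$- and $N_2$-edges of each cycle) are split one to $M_1$ and one to $M_2$, we have $w(M_1)w(M_2)=w(N_1)w(N_2)$; and because $M_1\cup M_2=N_1\cup N_2$ retains the path/cycle structure, $(N_1,N_2)$ is recoverable from $(M_1,M_2)$, so the map is injective. Summing over $\mathcal{P}$ then gives the displayed inequality. Applying it to all four partitions $\{S,T\}$ shows $f\in\mathcal{S}^{\textup{E}}_{\le^2}$; and if $f$ also satisfies arrow reversal symmetry, then $a_1=a_2$, $b_1=b_2$, $c_1=c_2$, $d_1=d_2$, and the four inequalities read $a^2\le b^2+c^2+d^2$ together with its three analogues, i.e.\ $(a,b,c,d)\in\texttt{SQ-SUM}$.

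The main obstacle is the bookkeeping in the exchange step: one must verify that for each of the twelve pairs $(\{S,T\},\mathcal{P})$ a legal canonical choice of $M_1$ exists (its set of missed vertices being one of the other three even blocks) and that the three maps out of a fixed $\{S,T\}$ have disjoint images --- the latter being automatic here since they target three distinct product-sums. Once the combinatorial structure of $N_1\cup N_2$ is in hand, the weight-preservation and injectivity claims are routine, and the reduction to distinct $v_i$ together with the treatment of length-one paths and doubly used edges are minor points already absorbed above.
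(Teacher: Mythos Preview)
Your proposal is correct and takes essentially the same approach as the paper: an alternating-path exchange argument producing a weight-preserving injection from $\mathrm{PM}(\Gamma'-S)\times\mathrm{PM}(\Gamma'-T)$ into the disjoint union of the three other product sets. The paper phrases it with the symmetric difference $m_1\oplus m_2$ and swaps along the single path $\pi$ containing $i_1$ (writing out the three cases explicitly for one inequality and invoking symmetry for the rest), whereas you use the multigraph $N_1\cup N_2$ and a canonical choice on both paths to treat all four inequalities uniformly; these are the same construction up to bookkeeping, and your deferred ``short case analysis'' is exactly the three bullets the paper spells out.
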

\begin{remark}
The last part $d_1d_2 \ \le\ a_1a_2 + b_1b_2 + c_1c_2$ was proved in \cite[Lemma 56]{BULATOV201711}.
The proofs for other three parts are symmetric and similar to the proof for the last part.
For completeness, here we give the proof for the first part $a_1a_2 \ \le\ b_1b_2 + c_1c_2 + d_1d_2$.
\end{remark}
\begin{proof}
Consider a 4-ary matchgate $\Gamma$ with constraint function $f$.
Given that $M(f) = \left[\begin{smallmatrix} d_1 & 0 & 0 & a_1 \\ 0 & b_1 & c_1 & 0 \\ 0 & c_2 & b_2 & 0 \\ a_2 & 0 & 0 & d_2 \end{smallmatrix}\right]$,
$a_1a_2 \ \le\ b_1b_2 + c_1c_2 + d_1d_2$ is equivalent as
\begin{equation}\label{eqn:tight}
f(0011)f(1100) \le f(0110)f(1001) + f(0101)f(1010) + f(0000)f(1111).
\end{equation}
Let $I = \{i_1, i_2, i_3, i_4\}$ be the set of dangling edges of $\Gamma$.
For $X \subseteq I$, let $M_X$ denote the set of perfect matchings that include dangling edges in $X$ (by assigning them $1$) and exclude dangling edges in $I \setminus X$ (by assigning them $0$).
We exhibit an injective map
\[
\mu: M_{\{i_1, i_2\}} \times M_{\{i_3, i_4\}} \rightarrow 
[ M_{\{i_2, i_3\}} \times M_{\{i_1, i_4\}} ]
 \bigcup 
[ M_{\{i_2, i_4\}} \times M_{\{i_1, i_3\}} ]
 \bigcup 
[ M_\emptyset \times M_I ]
\]
which is weight-preserving in the sense that
for matchings $m_1, m_2, m_3, m_4$ with $\mu(m_1, m_2) = (m_3, m_4)$, we have $w(m_1) w(m_2) = w(m_3)w(m_4)$. The existence of $\mu$ implies (\ref{eqn:tight}).

Given $(m_1, m_2) \in M_{\{i_1, i_2\}} \times M_{\{i_3, i_4\}}$, consider $m_1 \oplus m_2$ and note that this is a collection of cycles together with two paths.
Let $\pi$ be the path connecting the dangling edge $i_1$ to some other dangling edge; let $\pi'$ be the path connecting the remaining two dangling edges.
Let $m_3 := m_1 \oplus \pi$ and $m_4 := m_2 \oplus \pi$.
Then we have the following
\begin{itemize}
\item
If $\pi$ connects $i_1$ to $i_2$, then $m_3 \in M_\emptyset$ and $m_4 \in M_I$;
\item
If $\pi$ connects $i_1$ to $i_3$, then $m_3 \in M_{\{i_2, i_3\}}$ and $m_4 \in M_{\{i_1, i_4\}}$;
\item
If $\pi$ connects $i_1$ to $i_4$, then $m_3 \in M_{\{i_2, i_4\}}$ and $m_4 \in M_{\{i_1, i_3\}}$.
\end{itemize}
The construction is invertible, since
if $(m_3, m_4)$ is in the image of the above mapping,
then $m_3 \oplus m_4 = m_1 \oplus m_2$. From $m_1 \oplus m_2$,
 we can recover $\pi$ 
(as the unique path that
connects $i_1$ to one of the other dangling edges in $\{i_2, i_3, i_4\}$).
Then we can recover
    $m_1$ and $m_2$
as $m_3 \oplus \pi$ and $m_4 \oplus \pi$ respectively. 
Therefore, $\mu: (m_1, m_2) \rightarrow (m_3, m_4)$ is an injection.

To see that $\mu$ is weight-preserving, observe that the each of the edges in $\pi$ appears in exactly one of $m_1$ and $m_2$ and in exactly one of $m_3$ and $m_4$ and that $m_i \setminus \pi = m_{i+2} \setminus \pi$ for $i \in \{1, 2\}$.
Hence,
\[
w(m_1)w(m_2) = \prod_{e \in m_1 \setminus \pi} w_e
\cdot \prod_{e \in m_2 \setminus \pi} w_e
\cdot \prod_{e \in \pi} w_e = \prod_{e \in m_3 \setminus \pi} w_e 
\cdot \prod_{e \in m_4 \setminus \pi}  w_e 
\cdot  \prod_{e \in \pi}  w_e  = w(m_3)w(m_4).
\]
\end{proof}

\bibliography{reference}{}
\bibliographystyle{alpha}

\clearpage
\appendix
\section*{Appendix}

\section{}\label{app:complexity_table}
\newcommand{\ra}[1]{\renewcommand{\arraystretch}{#1}}
\begin{table*}[h!]
\centering
\caption[Caption for LOF]{Approximation complexity of the eight-vertex model with $(a, b, c, d) \not\in \texttt{$d$-SUM}$\protect\footnotemark.}
\ra{1.7}
\begin{tabular}{@{}l@{\phantom{abc}}l@{\phantom{abc}}l@{}}
\toprule
 & $d = 0$ & $d > 0$ \\
\midrule
$a=b=c=0$ & P-time computable (trivial) & P-time computable (trivial) \\
$a=b=0, c>0$ & P-time computable (trivial) & \pbox{20cm}{$c=d$: P-time computable~\cite{DBLP:journals/corr/CaiF17} \\ $c\neq d$: NP-hard~\cite{DBLP:journals/corr/abs-1811-03126}} \\
$a=0, b, c >0$ & NP-hard~\cite{DBLP:journals/corr/abs-1811-03126} & \#\textsc{PerfectMatchings}-hard (in this paper)\\
$a, b, c > 0$ & NP-hard~\cite{doi:10.1137/1.9781611975482.136} & \#\textsc{PerfectMatchings}-hard (in this paper)\\
\bottomrule
\end{tabular}
\end{table*}

\footnotetext{There is a symmetry among $a, b, c$ for the eight-vertex model on general (not necessarily planar) 4-regular graphs, so for simplicity we assume $a \le b \le c$.}

\section{}\label{app:star_condition}

Given a constraint function $f$ with $M(f) = \left[\begin{smallmatrix} d & 0 & 0 & a \\ 0 & b & c & 0 \\ 0 & c & b & 0 \\ a & 0 & 0 & d \end{smallmatrix}\right]$ and $a, b, c \ge 0, d > 0$,
we show how to obtain a 4-ary construction with constraint function $\hat{f}$ such that $M(\hat{f}) = \left[\begin{smallmatrix} \hat{d} & 0 & 0 & \hat{a} \\ 0 & \hat{b} & \hat{c} & 0 \\ 0 & \hat{c} & \hat{b} & 0 \\ \hat{a} & 0 & 0 & \hat{d} \end{smallmatrix}\right]$ and $\hat{a}, \hat{b}, \hat{c}, \hat{d}$ satisfy the condition  $0 < \hat{d} \le \hat{a} \le \hat{b} \le \hat{c} \le \frac{3}{2}\hat{d}$, using copies of the constraint function $f$.
(When used in the context of
  \lemref{lem:pm-hard_general}, the number of copies of $f$ used is
a  constant.)
By a scaling we  get 
\[1 \le \hat{d} \le \hat{a} \le \hat{b} \le \hat{c} \le \frac{3}{2}\hat{d}.\]

We first deal with the case where one of $a, b, c$ is equal to 0.
By relabeling we may assume
$b=0$. Then we use the construction $G_1$ in the proof 
of \lemref{lem:pm-hard_general} to get a signature
with parameters $(a_1, b_1, c_1, d_1) = (ac, d^2, a^2, cd)$ with no 
zero entries.

In the following we may assume all entries are positive $a, b, c, d > 0$.

Our first step is to satisfy (\ref{eqn:condition}).
If all four entries are equal $a=b=c=d$ then we are done (the signature
is also in the affine family and thus the problem is also P-time tractable.)
So we may assume  at least one of $a, b, c \neq d$. By relabeling we 
may assume $a \neq d$.
We may also assume at least one of the following three displayed inequalities
does not hold, and by relabeling we may assume $a +d < b + c$. Indeed, if all
three inequalities
\[ a + d \ge b + c, ~~~~ b + d \ge a + c, ~~~~ c + d \ge a + b\]
hold,
then by connecting 
two copies of $f$ using the 4-ary construction in \figref{fig:2-chain} and 
relabeling edges, we get $(a', b', c', d') =
(2bc, a^2 + d^2, b^2 + c^2, 2ad)$.
This signature has $a' + d' < b' + c'$.
%

Under a holographic transformation (in the proof of
 \lemref{lem:pm-hard_holant}),
in the edge-2-coloring view
 $\left[\begin{smallmatrix} d & 0 & 0 & a \\ 0 & b & c & 0 \\ 0 & c & b & 0 \\ a & 0 & 0 & d \end{smallmatrix}\right]$ becomes $\frac{1}{2}
\left[\begin{smallmatrix} a + b + c + d & 0 & 0 & - a + b + c - d \\ 0 & a - b + c - d & a +b - c - d & 0 \\ 0 & a +b - c - d & a - b + c - d & 0 \\ - a + b + c - d & 0 & 0 & a + b + c + d \end{smallmatrix}\right] = 
P \cdot \left[\begin{smallmatrix} a+d & 0 & 0 & 0 \\ 0 & c-b & 0 & 0 \\ 0 & 0 & a-d & 0 \\ 0 & 0 & 0 & c+b \end{smallmatrix}\right] \cdot P^{-1}$, where $P = \left[\begin{smallmatrix} 1 & 0 & 0 & 1 \\ 0 & 1 & 1 & 0 \\ 0 & -1 & 1 & 0 \\ -1 & 0 & 0 & 1 \end{smallmatrix}\right]$.
Connect $k$ copies of $f$ using the 4-ary construction in \figref{fig:2-chain}, we get the constraint function $f'$ with $M(f') = \left[\begin{smallmatrix} d' & 0 & 0 & a' \\ 0 & b' & c' & 0 \\ 0 & c' & b' & 0 \\ a' & 0 & 0 & d' \end{smallmatrix}\right]$ which in the edge-2-coloring view
becomes $P \left[\begin{smallmatrix} a'+d' & 0 & 0 & 0 \\ 0 & c'-b' & 0 & 0 \\ 0 & 0 & a'-d' & 0 \\ 0 & 0 & 0 & c'+b' \end{smallmatrix}\right] P^{-1}$. %
By construction, it is also equal to
\[P\left[\begin{smallmatrix} a+d & 0 & 0 & 0 \\ 0 & c-b & 0 & 0 \\ 0 & 0 & a-d & 0 \\ 0 & 0 & 0 & c+b \end{smallmatrix}\right]^{k}P^{-1} = 
P\left[\begin{smallmatrix} (a+d)^k & 0 & 0 & 0 \\ 0 & (c-b)^k & 0 & 0 \\ 0 & 0 & (a-d)^k & 0 \\ 0 & 0 & 0 & (c+b)^k \end{smallmatrix}\right]P^{-1}.
\]
It follows that
\[a' =\frac{(a+d)^k+ (a-d)^k}{2},~~
  b' =\frac{(b+c)^k - (c-b)^k}{2},~~
  c' =\frac{(b+c)^k + (c-b)^k}{2},~~
  d' = \frac{(a+d)^k- (a-d)^k}{2}.\] 
Since $b+ c$ is strictly larger than $a+d$,  and also
$b+c = |b+c| > |c-b|$ and $|a-d|$,
 one can see that for a large $k$, we can 
get $f'$ such that $b' \approx c' \gg a', d'$, and all entries are positive.
If furthermore $d' \le a'$, then we have arrived at a signature
$f'$ satisfying (\ref{eqn:condition}) with a possible relabeling.
If $d' > a'$,
using the construction in \figref{fig:2-chain} with two copies of $f'$ (and two $(\neq_2)$ in the middle as we are in the orientation view),
we can get $f''$ with
$M(f'') = M(f') \cdot N \cdot M(f') = \left[\begin{smallmatrix} 2a'd' & 0 & 0 & {a'}^2+{d'}^2 \\ 0 & 2b'c' & {b'}^2+{c'}^2 & 0 \\ 0 & {b'}^2+{c'}^2 & 2b'c' & 0 \\ {a'}^2+{d'}^2 & 0 & 0 & 2a'd' \end{smallmatrix}\right]$.
The four parameters for $f''$ is $(a'', b'', c'', d'') = ({a'}^2 + {d'}^2, 2b'c', {b'}^2 + {c'}^2, 2a'd')$ which satisfies (\ref{eqn:condition})
(with a possible relabeling) when $b', c' \gg a', d'$.

Now that we have a constraint function $f'$ (or $f''$ if the last
step is needed)
 satisfying (\ref{eqn:condition}), we obtain $\hat{f}$ using 4-ary constructions with $f'$ (or $f''$ respectively).
This is done by repeatedly applying $G_2$ (defined in \lemref{lem:pm-hard_general}) in which each round starts with the
function constructed from the previous round, and possibily relabeling the 
input values $(a, b, c, d)$,
so that the construction $G_2$ 
starts with some $(a, b, c, d)$  satisfying
 (\ref{eqn:condition}).
Note that for any $(a, b, c, d)$ satisfying (\ref{eqn:condition}) and $c > \frac{3}{2}d$, $G_2$ shrinks the distance between $c$ and $d$ by 
at least a constant (after normalization).
This can be checked by $\frac{c}{d} - \frac{c_2}{d_2} = \frac{\left(\frac{c}{d} + 1\right)\left(\frac{c}{d} - 1\right)}{2 \cdot \frac{c}{d}} > \frac{5}{12}$ when $c > \frac{3}{2}d$.
Therefore, starting with $f'$ and iterative applying $G_2$ 
(constantly many times), we can obtain $\hat{f}$ so that $1 \le \hat{d} \le \hat{a} \le \hat{b} \le \hat{c} \le \frac{3}{2}\hat{d}$.

\end{document}